\newcommand{\NP}{{$\mathsf{NP}$}\xspace}
\newcommand{\PP}{{$\mathsf{P}$}\xspace} 
\newcommand{\SKA}{{\em SKA}\xspace}
\newcommand{\NSKA}{{\em NSKA}\xspace}
\newcommand{\NISKA}{{\em (N)SKA}\xspace}
\newcommand{\SST}{{$SST$}\xspace}
\newcommand{\SMST}{{$SMST$}\xspace}
\newcommand{\OISMST}{{$01\text{-}SMST$}\xspace}
\newcommand{\POISMST}{{$\cap\text{-}01\text{-}SMST$}\xspace}
\newtheorem{observation}[theorem]{Observation}
\newtheorem{fact}[theorem]{Fact}
\newcommand{\R}{\mathbb{R}}   
\begin{document}

\title{On the Simultaneous Minimum Spanning Trees Problem}

\author{
Mat\v ej Kone\v cn\'y,
Stanislav Ku\v cera,
Jana Novotn\'a, 
Jakub Pek\'{a}rek, 
Martin Smol\'ik,
Jakub T\v etek,
Martin T\"opfer}
\institute{Department of Applied Mathematics, Faculty of Mathematics and Physics, Charles
University, Prague, Czech Republic,\\
\email{matejkon@gmail.com, stanislav.kucera@outlook.com, janca@kam.mff.cuni.cz,
edalegos@gmail.com,
smolik.ma@gmail.com,
j.tetek@gmail.com,
mtopfer@gmail.com}
}
\maketitle

\begin{abstract}


Simultaneous Embedding with Fixed Edges (SEFE) \cite{b:sefe} is a problem where given $k$ planar graphs we ask whether they can be simultaneously embedded so that the embedding of each graph is planar and common edges are drawn the same. Problems of SEFE type have inspired questions of Simultaneous Geometrical Representations and further derivations. Based on this motivation we investigate the generalization of the simultaneous paradigm on the classical combinatorial problem of minimum spanning trees. Given $k$ graphs with weighted edges, such that they have a common intersection, are there minimum spanning trees of the respective graphs such that they agree on the intersection? We show that the unweighted case is polynomial-time solvable while the weighted case is only polynomial-time solvable for $k=2$ and it is \NP-complete for $k\geq 3$.


\end{abstract}

\section{Introduction} 
The problem of finding a minimum spanning tree is one of the most important and most well-studied problems in graph algorithms. We consider a variant of this problem inspired by the following motivation. 

In a Sunflower land, there is a capital city and several smaller cities around it. In the past, there was a telecommunication company based in the capital city, but it is now bankrupt. The inhabitants of each of the small cities want to establish their own telecommunication company that would connect all of the houses in their city as well as all of the houses in the capital. The representatives of each city meet to coordinate their soon-to-be networks so that they all agree on the capital and can split the cost of covering the capital evenly. However, all of the companies are so afraid of bankruptcy that none of them would accept a solution that would cost them a single dollar more than necessary. Is it always possible to plan all of the networks so that all of the companies reach their goal simultaneously while each of the individual costs is minimized? How hard is it to find such a plan, if it exists, or recognize that it does not exist?

\begin{problem}[Simultaneous Minimum Spanning Trees]
Let $k$ be a positive integer and let $G_1=(V_1, E_1),$ $G_2=(V_2, E_2), \ldots, G_k=(V_k, E_k)$ be graphs and $w$ a non-negative weight function of all of their edges ($w:{\bigcup}_{i=1}^{k} E_i \rightarrow \R^{+}_0$) such that there is a graph $\bar{G}$ satisfying that $\bar{G} = G_i[\bar{V}]$ for any $i$ from 1 to $k$, where $\bar{V} = V_i \cap V_j$ for any $i \neq j$ from 1 to $k$ (i.e. the graphs together form a ``sunflower'' shape with no lateral edges). Find minimum spanning trees $T_i\subseteq G_i$, such that they all coincide on $\bar{G}$, or answer $NO$ if there are no such spanning trees. We shall abbreviate this problem as \SMST.
\end{problem}

Note that the $T_i$'s do not have to induce a spanning tree on $\bar{G}$, nor does the union of $T_i$'s have to be acyclic on the union of all of the $G_i$'s. Indeed both of these situations necessarily happen in solutions of some instances of the \SMST problem. Unlike the minimum spanning tree problem, the \SMST problem does not always have a solution.

As an example, let $G_1$ be a triangle $xzy$, let $G_2$ be a triangle $xwy$ and let $xy$ be the heavies edge. Although $G_1 \cap G_2$ induces a connected graph (edge $xy$), we have a unique solution $\{xz,zy,yw,wx\}$ which is not connected on $G_1 \cap G_2$ and is not acyclic on $G_1 \cup G_2$. Furthermore, if we remove any light edge, e.g. $xz$, then there is no solution. 

We show that \SMST is an \NP-complete problem already for a fairly small number of graphs (more than 2) and even when limited to simplified instances. We present a scheme that allows us to solve any \SMST for two graphs in polynomial time using a tandem of reductions and multiple runs of matroid intersection algorithm. 

\subsection{Preliminaries}
The problem of finding a minimum spanning tree for a single graph has been studied thoroughly since Bor\r uvka \cite{b:boruvka}, Jarn\' ik-Prim \cite{b:jarnik}\cite{b:prim} and Kruskal \cite{b:kruskal}. See \cite{b:mst} for more details. Currently, the optimal algorithm is known \cite{b:mstopt}, but its asymptotics is still an open problem. 

We do not distinguish instances where the input graph is connected from instances where it is disconnected. The inclusion of disconnected instances is natural as many constructions work just as well under such circumstances. Furthermore, usual incremental and iterative approaches typically work on subsets of the input graph and it is therefore not strictly clear whether they maintain a spanning tree or a spanning forest. For convenience, we define the usual term {\em spanning tree} as a maximal acyclic subgraph. In doing so we include the disconnected case, where the more proper term would be {\em spanning forest}. 

We focus mainly on the Kruskal's algorithm and use its known properties. Kruskal's algorithm starts by sorting the edges in a non-decreasing order (by weight) or obtains the edges in a non-decreasing order on input. Then it processes all the edges one by one in sorted order while greedily maintaining maximum acyclic subgraph which we refer to as {\em partial spanning tree}. 

\begin{definition}
Consider the run of Kruskal's algorithm. A {\em stage} is a collection of steps in which the algorithm processes edges of the same weight. 
\end{definition}

\begin{fact}\label{fact:kruskal}
Let $G = (V,E,w)$ be a graph with weighted edges. Then all of the following holds for Kruskal's algorithm applied to graph $G$ and a non-decreasing order of edges $\pi$: 
\begin{itemize}
    \item Kruskal's algorithm is complete (finishes) and correct (answers correctly) for any non-decreasing $\pi$, although the created spanning trees might be different. 
    \item Let $T$ be a minimum spanning tree of $G$ and let $\pi_T$ be the non-decreasing order such that all edges from $T$ are ordered before all edges of the same weight that are not from $T$. Then Kruskal's algorithm using $\pi_T$ outputs exactly $T$. 
    \item After every stage, components of the partial spanning tree span across the same vertices for all non-decreasing $\pi$. 
    \item Edges added to the partial spanning tree in each stage depend only on their ordering, not on the edges chosen in the previous stages. 
    \item Kruskal's algorithm accesses $\pi$ in a read-once fashion, accepting or refusing each edge before accessing the next one. 
\end{itemize}
\end{fact}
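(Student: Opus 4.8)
\emph{Proof idea.} All five items are classical properties of Kruskal's algorithm, and the plan is to organise them around the graphic matroid and, more precisely, around what happens inside a single stage. Two of them are immediate from the description of the algorithm: the acceptance test applied to the currently processed edge $e$ is only ``does $e$ close a cycle together with the current partial spanning tree'', which uses nothing about edges not yet examined, so the access to $\pi$ is read-once (last item); and since $E$ is finite and each edge is processed exactly once, the algorithm terminates (the completeness part of the first item). It then remains to establish minimality (the correctness part of the first item), the per-stage determinism of the vertex partition, the per-stage determinism of the accepted edge set, and the statement that $\pi_T$ reproduces $T$.

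The combinatorial core is the following per-stage lemma, which I would prove first. Fix a non-decreasing order $\pi$ and consider the instant at which Kruskal begins the stage of some weight $w$; let $F$ be the partial spanning tree at that instant and let $\mathcal P$ be the partition of $V$ into the vertex sets of the components of $F$. During the stage, an edge $e$ of weight $w$ is accepted precisely when its two endpoints lie in different blocks of the current partition, which is always a coarsening of $\mathcal P$; hence the set $A$ of edges accepted in the stage is exactly a spanning forest of the multigraph $H$ obtained from $(V,E_w)$ by contracting each block of $\mathcal P$, where $E_w$ denotes the edges of weight $w$. Since every spanning forest of a fixed graph induces the same connected-components partition and has the same number of edges, the partition $\mathcal P'$ induced by $F\cup A$ and the cardinality $|A|$ depend only on $\mathcal P$ and $E_w$, and not on the order in which the edges of $E_w$ are examined nor on which representative edges happen to be added. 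From this, items three and four follow by induction over the stages taken in increasing order of weight, which is the same order for every non-decreasing $\pi$: the partition entering the stage of weight $w$ is order-independent by the inductive hypothesis, hence so is the one leaving it (item three), and the accepted set in that stage is a function of the incoming partition together with $\pi$ restricted to $E_w$ only (item four), in particular independent of which edges were chosen in earlier stages.

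For the statement that $\pi_T$ reproduces $T$, let $T$ be a minimum spanning tree and $\pi_T$ as in the hypothesis. I would show by induction over stages that after the stage of weight $w$ the partial spanning tree equals $T_{\le w}:=\{e\in T: w(e)\le w\}$. Within the stage of weight $w$ the $T$-edges of that weight come first, and each of them, when processed, is added to a subset of $T$, hence closes no cycle and is accepted; after them the partial tree is $T_{\le w}$. If some non-$T$ edge $f$ of weight $w$ were then accepted, $T_{\le w}+f$ would be acyclic, so the endpoints of $f$ would lie in different trees of the forest $T_{\le w}$; the $T$-path between them would contain an edge $g$ with $w(g)>w\ge w(f)$, and $T-g+f$ would be a spanning tree of weight $w(T)-w(g)+w(f)<w(T)$, contradicting minimality of $T$; so every such $f$ is rejected and the stage ends with $T_{\le w}$. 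Taking $w$ maximal yields the claim. Correctness then follows: any Kruskal output is a maximal acyclic subgraph, since all edges are processed and an edge is rejected only when it closes a cycle with the then-current, hence also with the final, partial spanning tree; by item four the number of edges accepted in each stage, and therefore the total weight of the output, is the same for every non-decreasing $\pi$; applying the previous argument to a minimum-weight maximal acyclic subgraph $T$ and its order $\pi_T$ shows this common weight equals $w(T)$, so every Kruskal output is of minimum weight.

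I expect the only real obstacle to be the per-stage lemma: making precise that the greedy process restricted to equal-weight edges realises \emph{some} spanning forest of the contracted multigraph, and that the resulting component partition and the edge count are order-invariant even though the concrete accepted set is generally not. Once that is in place, items three and four are a routine induction over stages, item two is a standard exchange argument, and the remaining two items are immediate from the algorithm's definition.
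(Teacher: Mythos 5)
Your proof is correct, but there is nothing in the paper to compare it against: the statement is labelled a \emph{Fact} precisely because the authors treat these as classical, well-known properties of Kruskal's algorithm (``we focus mainly on the Kruskal's algorithm and use its known properties'') and give no proof at all. Your write-up is a sound self-contained verification. The per-stage lemma --- that within one stage the accepted edges form a spanning forest of the multigraph obtained by contracting the blocks of the incoming component partition, so that the outgoing partition and the number of accepted edges are order-invariant even though the accepted set is not --- is exactly the right unifying device: it yields the third and fourth items by induction over stages, and it also gives the weight-invariance of the output that you use to deduce correctness from the $\pi_T$ item, which you handle by the standard exchange argument. Two small points worth making explicit if this were written out in full: under the paper's convention a ``spanning tree'' of a possibly disconnected graph means a maximal acyclic subgraph, and your exchange argument needs that the endpoints of the candidate non-$T$ edge $f$ lie in the same component of $G$, hence of $T$, so that the $T$-path exists; and in the stage argument one should fix $f$ as the \emph{first} non-$T$ edge of weight $w$ that would be accepted, so that the partial tree at that moment is exactly $T_{\le w}$. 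Neither is a gap, just phrasing to tighten.
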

\section{Simultaneous Kruskal's algorithm}

Consider a \SMST task for a given $k$ and graphs $G_1, G_2, \dots, G_k$. Let us denote the union of all $G_i$s as $G$ and their intersection as $\bar{G}$. Suppose we order all the edges of $G$ in a non-decreasing order $\pi$ which we call a {\em universal order} and denote $\pi[E(G_i)]$ the restrictions of $\pi$ to edges in $G_i$ for every $i$. For a set of edges $F$, we also say that a universal order is {\em F-preferring} if all the edges from $F$ are ordered before any other edges of the same weight. 

Consider the following construction. First, we fix an arbitrary non-decreasing universal order $\pi$. We simulate $k$ independent instances of Kruskal's algorithm, $K_1,K_2,...,K_k$ where the job of each $K_i$ is to find a minimum spanning tree $T_i$ of $G_i$ using the order $\pi[E(G_i)]$, not considering the other instances. In parallel with the instances of the Kruskal's algorithm we try to incrementally build a simultaneous minimum spanning tree. 

In the beginning, we start with an empty simultaneous spanning tree $T$ and process all the edges one by one according to the universal order. We present each edge $e$ to all instances $K_i$ such that $e \in G_i$. If we assume a sunflower intersection, we can rephrase this in the following way: if $e \in \bar{G}$ then we present $e$ to all instances and if $e \notin \bar{G}$ then $e \in G_j$ for some unique $j$ and we present $e$ only to one instance $K_j$. If every invoked $K_i$ adds $e$ to its local $T_i$, we also add $e$ to $T$. If every invoked instance $K_i$ refuses to add $e$ to its local $T_i$, we also throw $e$ away. If the invoked instances do not agree, we fail. If the algorithm processes all edges without failing, we output $T$ as a solution. 

We call this construction {\em Simultaneous Kruskal's algorithm} or \SKA in short. There are two natural versions of the \SKA. If \SKA expects the universal order $\pi$ on input, then it is a deterministic algorithm. Alternatively, \SKA may be formulated as a non-deterministic algorithm which guesses a correct universal order which avoids failure (if any such order exists), then we speak of a non-deterministic simultaneous Kruskal's algorithm or \NSKA in short. We naturally extend the definition of a stage from Kruskal's algorithm to the \NISKA as the collection of steps in which the algorithm processes edges of the same weight. 

\begin{lemma}\label{lem:ska}
Let $I$ be an instance of the \SMST problem. Then all of the following holds for simultaneous Kruskal's algorithm: 

\begin{itemize}
    \item \NSKA is complete (finishes) and correct (answers correctly).  
    \item Let $T$ be a solution of $I$ and a let $\pi_T$ be a $T$-preferring universal order. Then \SKA using $\pi_T$ outputs exactly $T$. 
    \item After every successful stage of \SKA and \NSKA, components of the partial simultaneous spanning tree after restriction to any $G_i$ span across the same vertices for all choices of universal order $\pi$. 
    \item Edges added to the partial simultaneous spanning tree in each stage depend only on their ordering, not on the edges chosen in the previous stages. 
    \item \SKA accesses $\pi$ in a read-once fashion, accepting or refusing each edge before accessing the next one. 
\end{itemize}
\end{lemma}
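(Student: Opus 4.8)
The plan is to prove the five bullets essentially in the order given, leaning on Fact~\ref{fact:kruskal} applied to each instance $K_i$ separately and then reconciling the instances through the sunflower structure. Throughout, the key observation is that each $K_i$ is just an ordinary run of Kruskal's algorithm on $G_i$ with the induced order $\pi[E(G_i)]$, so every property in Fact~\ref{fact:kruskal} holds locally; the work is in transferring these to the combined object $T$.

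I would start with the second bullet, since it is the most self-contained and motivates the rest. Given a solution $T$ of $I$ and a $T$-preferring universal order $\pi_T$, the restriction $\pi_T[E(G_i)]$ is a $T_i$-preferring non-decreasing order of $E(G_i)$, where $T_i = T \cap G_i$ is a minimum spanning tree of $G_i$ (this is what it means for $T$ to be a solution). By the second bullet of Fact~\ref{fact:kruskal}, $K_i$ run on $\pi_T[E(G_i)]$ outputs exactly $T_i$. Hence on every edge $e$, all invoked instances $K_i$ accept $e$ iff $e \in T$, so they never disagree: \SKA does not fail and outputs $\bigcup_i T_i = T$. This simultaneously shows that whenever $I$ has a solution there is a universal order on which \SKA succeeds, which gives one direction of correctness of \NSKA; the other direction is that any order on which \SKA succeeds produces, by construction, a $T$ with $T\cap G_i$ equal to the Kruskal output on $G_i$, hence a minimum spanning tree of each $G_i$ coinciding on $\bar G$ — i.e. a genuine solution. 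Completeness of \NSKA is immediate because it is a finite guess followed by the deterministic \SKA, which terminates since each $K_i$ does.

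For the third bullet I would argue stage by stage. Fix a weight value $w_0$ and consider the state just after the stage processing weight $w_0$, assuming \SKA succeeded so far. For each $i$, the partial simultaneous tree restricted to $G_i$ is exactly the partial spanning tree maintained by $K_i$: indeed \SKA adds $e\in E(G_i)$ to $T$ exactly when $K_i$ adds it (no failure means all invoked instances agreed, and the ones not containing $e$ are irrelevant to $G_i$). By the third bullet of Fact~\ref{fact:kruskal} applied to $K_i$, the vertex sets of the components of this partial tree in $G_i$ after the stage do not depend on the choice of $\pi[E(G_i)]$, hence not on the universal order. The fourth bullet follows the same way from the fourth bullet of Fact~\ref{fact:kruskal}: in stage $w_0$, whether $K_i$ accepts an edge of weight $w_0$ depends only on the ordering of the weight-$w_0$ edges of $G_i$ and on which vertices lie in a common component entering the stage, and the latter is stage-independent by the third bullet just proved; reconciling across $i$ again uses that on common edges all instances see the same pre-stage component structure restricted appropriately — here one must note that for $e\in\bar G$ the decision of $K_i$ and $K_j$ must coincide, which holds because $\bar G = G_i[\bar V] = G_j[\bar V]$ and, by induction, the components restricted to $\bar V$ are the same in every instance. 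The fifth bullet is immediate: each $K_i$ accesses its order read-once (fifth bullet of Fact~\ref{fact:kruskal}), and \SKA consults, for each universal-order edge $e$, only the instances containing $e$ and decides before moving on.

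The main obstacle I expect is the fourth bullet, specifically making precise the claim that the per-stage decisions are consistent across instances so that ``the edges added in a stage depend only on their ordering.'' This requires an inductive invariant stating that after each successful stage the component partition of $\bar V$ is identical in all instances $K_i$; this invariant is exactly what lets the sunflower edges be treated coherently, and it must be threaded through the proofs of bullets three and four together rather than proved in isolation. The remaining steps are routine bookkeeping on top of Fact~\ref{fact:kruskal}.
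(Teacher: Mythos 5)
Your overall structure matches the paper's proof: you prove the second bullet first by applying Fact~\ref{fact:kruskal} to each $K_i$ with the $T_i$-preferring restricted order, derive correctness and completeness of \NSKA from it, and treat the remaining bullets as per-instance consequences of Fact~\ref{fact:kruskal} (the paper dismisses these last three as ``simple observations''). Bullets one, two, three and five are handled correctly and essentially as in the paper.

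However, your treatment of the fourth bullet contains a genuine error. You assert that ``for $e\in\bar G$ the decision of $K_i$ and $K_j$ must coincide,'' justified by an inductive invariant that the component partition of $\bar V$ is the same in every instance. Both claims are false: an exclusive path in $G_i$ can join two intersection vertices that remain separated in $G_j$, so the partitions of $\bar V$ induced by the partial trees can differ across instances, and instances can (and do) disagree on a common edge --- that is precisely when \SKA fails. The paper's own introductory example shows this: two triangles $xzy$ and $xwy$ sharing the heaviest edge $xy$, with one light edge (say $xz$) removed; there $K_1$ accepts $xy$ while $K_2$ rejects it. Fortunately the invariant you flag as ``the main obstacle'' is not needed at all: the fourth bullet does not claim cross-instance agreement, only that the edges added in a stage are determined by the intra-stage ordering. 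This follows directly from the fourth bullet of Fact~\ref{fact:kruskal} applied to each $K_i$ separately --- each $K_i$'s accept/reject decisions within the stage depend only on the ordering of that stage's edges of $G_i$, and \SKA adds $e$ (or fails) purely as a function of those decisions --- so you should delete the false reconciliation step and argue the bullet this way.
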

\begin{proof}
Let us first prove the second point. Suppose we run the \SKA using the $T$-preferring universal order. Let us analyze the behavior of an arbitrary $K_i$. Let $T_i$ denote the restriction of $T$ to $G_i$. By definition $T_i$ is a minimum spanning tree of $G_i$ and $\pi[G_i]$ is a $T_i$-preferring order. From the properties of the Kruskal's algorithm (fact \ref{fact:kruskal}) we know that $K_i$ constructs exactly $T_i$. Since every $K_i$ would construct exactly $T_i$ should it run on its own, we observe that all the invoked instances $K_j$ accept each edge if and only if it belongs to $T$, and the whole algorithm never fails. At the end of the computation the algorithm gives exactly $T$ as a solution. 

To prove correctness, let us first suppose that the \NSKA terminates with success. Then the set $T$ on output is a union of the local spanning trees from all $K_i$ algorithms. Since each algorithm $K_j$ processes all the edges from $G_j$ in a non-decreasing order of weight and Kruskal's algorithm is sound, each of the local spanning trees is a minimum spanning tree. Thus, $T$ is a solution of the \SMST problem. If \NSKA terminates with a failure, then from the second point it follows that there was no solution $T$, as otherwise \NSKA guesses a $T$-preferring universal order and terminates successfully.

The last three points are simple observations extending the facts \ref{fact:kruskal} into simultaneous setting using the previous two points.\qed
\end{proof}

\section{Cases and variants}

\begin{lemma}\label{lem:selfred}
Let $I$ be a feasible instance of the \SMST problem. Then any solution $T'$ of $I$ restricted to edges of weight at most $w$ can be extended to a solution $T$ of the whole $I$ by adding some edges of weight greater than $w$. Furthermore, this extension does not depend on $T'$. 
\end{lemma}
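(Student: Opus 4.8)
The plan is to exploit the structural properties of \SKA established in Lemma~\ref{lem:ska}, in particular the two ``stage-locality'' bullets: the components spanned after each completed stage do not depend on the chosen universal order, and the edges accepted within a given stage depend only on the ordering of that stage's edges, not on earlier choices. These two facts together say that the behaviour of any successful run is, stage by stage, essentially canonical.

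First I would set up notation: let $W$ be the threshold weight, let $I_{\le W}$ denote the sub-instance obtained by deleting all edges of weight greater than $W$, and let $T'$ be a solution of $I_{\le W}$. Since $I$ is feasible, fix some solution $S$ of the whole $I$; by Lemma~\ref{lem:ska} there is an $S$-preferring universal order $\pi_S$ under which \SKA outputs exactly $S$. Now consider running \SKA on the full instance $I$ with a universal order $\pi$ that is $T'$-preferring among edges of weight $\le W$ (ordering edges of $T'$ before other edges of equal weight $\le W$) and agrees with $\pi_S$ on the edges of weight $>W$. I claim this run succeeds and outputs a solution $T \supseteq T'$. For the stages processing weights $\le W$: by the ``edges added in each stage depend only on the ordering'' property, \SKA restricted to $I_{\le W}$ with this order behaves exactly as it would on the stand-alone instance $I_{\le W}$ with a $T'$-preferring order, so by the second bullet of Lemma~\ref{lem:ska} it reproduces $T'$ exactly and in particular never fails on these stages.

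Next I would argue the run does not fail on the stages of weight $>W$ either, and that its output does not depend on $T'$. The key is the third bullet of Lemma~\ref{lem:ska}: after the last stage of weight $\le W$, the components spanned on each $G_i$ are the same regardless of the universal order chosen so far — in particular, they are the same whether we arrived there via $T'$ or via the prefix of $\pi_S$. Hence the state of every Kruskal instance $K_i$ entering the first stage of weight $>W$ is, up to the identity of which concrete edges were chosen, equivalent to its state in the successful run under $\pi_S$; and by the fourth bullet, which edges get accepted from here on depends only on the ordering of the remaining (weight $>W$) edges, which we chose to coincide with $\pi_S$. Therefore \SKA proceeds through all remaining stages exactly as the successful $\pi_S$-run does on those stages, accepting precisely the edges $S \setminus \{e : w(e) \le W\}$, never failing. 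So $T := T' \cup (S \setminus \{e: w(e)\le W\})$ is a solution of $I$, it extends $T'$, and — since the set of edges added in the weight-$>W$ stages is determined entirely by the post-threshold component structure (which is order-independent) and by the fixed tail order — this extension is the same no matter which solution $T'$ of $I_{\le W}$ we started from.

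The main obstacle I anticipate is making precise the phrase ``the state of $K_i$ is equivalent up to the identity of chosen edges.'' Kruskal's algorithm's future decisions depend on its current partition of $V_i$ into components, and I need to invoke Fact~\ref{fact:kruskal} (third bullet) to assert this partition is order-invariant after a completed stage, then combine with the fourth bullet to transfer this from the single-graph setting to the simultaneous one. A secondary subtlety is that $W$ may fall strictly inside a stage rather than between two stages; I would handle this by taking $W$ to be (without loss of generality) a value realized by some edge and noting that ``restricted to edges of weight at most $W$'' still corresponds to completing whole stages, since all edges of weight exactly $W$ are included — so the stage boundary argument applies cleanly. Everything else is bookkeeping with the read-once / stage-locality properties already packaged in Lemma~\ref{lem:ska}.
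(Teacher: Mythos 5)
Your proof is correct, and it follows the same overall strategy as the paper — fix one solution $S$ of the full instance and show that its heavy part is the universal extension — but the final verification is carried out differently. The paper runs \SKA only on the restricted instance with an $S$-preferring order: by the read-once property the run cannot tell it is restricted, so $S[L]$ is a solution of the restricted instance; then, since $T'$ and $S[L]$ are both solutions of that instance, they induce the same components and the same weight on every $G_i$, so the exchange $T'\cup S[H]$ is directly checked to be simultaneously acyclic, spanning and of optimal weight. You instead simulate \SKA on the \emph{full} instance with a hybrid universal order ($T'$-preferring on the light edges, agreeing with $\pi_S$ on the heavy ones) and invoke the third and fourth bullets of Lemma~\ref{lem:ska} to argue that the heavy stages behave exactly as in the $\pi_S$-run, so the run succeeds and outputs $T'\cup S[H]$. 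Both routes are sound; the trade-off is that your argument leans on the stage-locality bullets of Lemma~\ref{lem:ska}, which the paper states but proves only as ``simple observations,'' whereas the paper's swap argument needs only the read-once property, the correctness of successful runs, and the standard fact that all minimum spanning forests of a graph induce the same components and weight — making its final step more self-contained. On the other hand, your version makes explicit that \SKA itself succeeds on the full instance with any order extending a partial solution, which is precisely the form in which the lemma is later exploited (e.g.\ in Observation~\ref{cor:trivialstages}), so nothing essential is lost; your handling of the threshold-inside-a-stage issue and of the ``independence of $T'$'' claim is also accurate.
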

\begin{proof}
Let $T$ be a solution of the \SMST problem. We choose any $w$ and split the edges into a set of light edges $L$ of weight at most $w$ and a set of heavy edges $H$ with weight strictly greater than $w$. 

Consider running \SKA on the instance $I$ restricted to edges from $L$ using any $T$-preferring universal order. Since \SKA does not look ahead, it cannot distinguish whether it runs on a restricted instance or the full instance and therefore it does not fail and outputs $T$ restricted to $L$ (denoted $T[L]$), which is a solution of the restricted instance. Since $T'$ is also a solution of the restricted instance, both $T'$ and $T[L]$ define the same components on all individual graphs and have the same weight. Let us define $\bar{T} = T'\cup T[H]$. Clearly $\bar{T}$ is acyclic on each graph and has the same weight as $T$. Therefore $\bar{T}$ is a solution of the full instance, extending (any) $T'$. \qed
\end{proof}

\begin{observation}\label{cor:trivialstages}
Let us have an \SMST instance $I$ where $m(I)$ denotes the number of edges and $R(I)$ denotes the maximum number of repeats of any weight. If $R(I)! \in m(I)^{O(1)}$, in other words $R(I)$ is asymptotically very small, then $I$ can be solved in a polynomial time. 
\end{observation}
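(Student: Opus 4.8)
The plan is to derandomize \NSKA one stage at a time, exploiting the fact that, by Lemma~\ref{lem:ska}, what happens inside a stage depends only on the ordering of that stage's edges (together with the component structure inherited from the previous stages), and that the latter is itself independent of the choices made earlier. Concretely, I would first group the edges of $G=\bigcup_i G_i$ by weight into stages $S_1,\dots,S_t$ in non-decreasing order; by hypothesis $|S_\ell|\le R(I)$ for every $\ell$ and $t\le m(I)$. The algorithm maintains, for each $G_i$, the vertex partition induced by the partial simultaneous spanning tree (equivalently, the union--find states of $K_1,\dots,K_k$), initialised to singletons. Then it processes the stages in order: for stage $S_\ell$ it enumerates all (at most $R(I)!$) orderings of $S_\ell$; for a fixed ordering it feeds the edges to the relevant instances $K_i$ exactly as \SKA would, checking at each edge whether the invoked instances agree; if some ordering produces no disagreement, it keeps the set of edges that ordering added, updates all partitions accordingly, and moves to $S_{\ell+1}$; if \emph{every} ordering of $S_\ell$ leads to a failure, it answers $NO$. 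If all $t$ stages are handled, it outputs the union of the edge sets it kept.

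For correctness I would argue both directions from Lemma~\ref{lem:ska}. If the procedure succeeds, the concatenation of the chosen within-stage orderings is a universal order on which \SKA never fails, so by the second point of Lemma~\ref{lem:ska} (more precisely, by the soundness argument in its proof) the output is a genuine solution of $I$. For the converse, suppose $I$ is feasible with solution $T$ and let $\pi_T$ be a $T$-preferring universal order; then \SKA on $\pi_T$ succeeds on every stage. The crucial observation is that, by the third and fourth points of Lemma~\ref{lem:ska}, after any \emph{successful} processing of stages $S_1,\dots,S_{\ell-1}$ the partitions of all $G_i$ are the same regardless of which non-failing orderings were used; hence, when our procedure reaches $S_\ell$, it is in exactly the state in which \SKA would be under $\pi_T$ just before stage $\ell$. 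Therefore the restriction of $\pi_T$ to $S_\ell$ is among the orderings we try and it does not fail, so no stage can fail for all of its orderings and the procedure never wrongly outputs $NO$.

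The running-time bound is then routine: sorting and grouping the edges is polynomial; there are $t\le m(I)$ stages; each stage tries at most $R(I)!$ orderings; and simulating \SKA on one ordering of one stage costs $O(|S_\ell|\cdot k)$ union--find operations, i.e.\ polynomial in the input size. Thus the total cost is $m(I)^{O(1)}\cdot R(I)!$, which is $m(I)^{O(1)}$ exactly because $R(I)!\in m(I)^{O(1)}$.

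I expect the only delicate step to be the converse direction of correctness, namely arguing that ``the state at the start of a stage'' is well defined independently of which non-failing ordering was selected in each earlier stage; this is precisely what the stage-wise independence statements of Lemma~\ref{lem:ska} provide (lifting the corresponding properties of Fact~\ref{fact:kruskal} to the simultaneous setting), so once they are invoked the argument closes and the rest is bookkeeping.
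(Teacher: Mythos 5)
Your proposal is correct and takes essentially the same approach as the paper: both derandomize \NSKA by processing stages in order and exhaustively trying all (at most $R(I)!$) orderings within the current stage, concluding infeasibility if every ordering of a stage fails, which gives running time $m(I)^{O(1)}\cdot R(I)! = m(I)^{O(1)}$. The only minor difference is in how the ``no cross-stage backtracking'' step is justified: the paper cites Lemma~\ref{lem:selfred} (any partial solution on the lighter edges extends to a full solution if one exists), while you invoke the stage-independence properties of Lemma~\ref{lem:ska}; both justifications are valid.
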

\begin{proof}
Suppose we implement the \NSKA deterministically and use backtracking to guess the next edge in the universal order. The previous lemma shows that it is sufficient to consider only backtracks within the current stage. If we ever need to backtrack into the previous stage, then the solution of the previous stage cannot be extended and therefore no solution exists. 

If all of the weights in our instance of the \SMST are either distinct, or the number of repeats of each value is asymptotically very small, then we can try all possible orders within each stage in polynomial time. More precisely whenever $R(I)! \in m(I)^{O(1)}$ we have at most polynomially many orderings in each stage and the algorithm finishes in polynomial time. If $R(I) \in \mathcal{O}(\log\log n)$ then there are at most linearly many possible orderings and the algorithm's running time differs by only a factor of $\mathcal{O}(m)$ from the \NSKA's running time on a non-deterministic machine. 
\qed
\end{proof}

\begin{definition}
A {\em Simultaneous $\{0,1\}$ Minimum Spanning Tree} problem, or \OISMST in short, is an instance of \SMST where we restrict all the edge weights to be either 0 or 1. 
\end{definition}

We show an equivalence of the general \SMST and \OISMST up to a polynomial factor of complexity. 
\begin{lemma}\label{lem:01reduction}
Any algorithm solving \OISMST in polynomial time can be used to solve general \SMST problem in polynomial time. 
\end{lemma}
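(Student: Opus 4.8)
The plan is to reduce a general \SMST instance $I$ to an \OISMST instance $I'$ by processing the distinct weight values one stage at a time, from lightest to heaviest, replacing the actual weights by a $\{0,1\}$-pattern that preserves exactly which edges a simultaneous Kruskal run may pick. The guiding principle is Lemma~\ref{lem:selfred}: the behaviour of \SKA in a given stage depends only on the components that earlier (strictly lighter) stages have already merged in each $G_i$, not on the particular choices made there. So if I can, for each stage, encode "which edges are still allowed to be added" using only weights $0$ and $1$, I can glue these encodings together.

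Concretely, suppose the distinct weights occurring in $I$ are $v_1 < v_2 < \dots < v_r$. I will build $I'$ iteratively. After having fixed a solution on all stages up to $v_{t-1}$, the relevant state is, for each graph $G_i$, the partition of $V_i$ into components of the partial simultaneous spanning tree; by Lemma~\ref{lem:ska} this partition is independent of the choices made. To simulate stage $t$: contract, in each $G_i$, every component of that partition to a single vertex (these contractions are consistent on $\bar G$ because the $T_i$ agree there), keep only the edges of weight exactly $v_t$, and give every one of them weight $1$, while the edges realizing the current component structure are represented by weight-$0$ edges of the appropriate contracted graph. Running \OISMST on this gadget tells us exactly which weight-$v_t$ edges can be simultaneously added. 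Stacking the gadgets for $t = 1, \dots, r$ — with the weight-$0$ part of gadget $t+1$ being the union of the weight-$1$ parts of gadgets $1, \dots, t$ — yields a single \OISMST instance $I'$ whose solutions correspond bijectively (after uncontracting) to solutions of $I$, stage by stage. Since a $\{0,1\}$ instance has only two stages, I may need to run the \OISMST oracle once per original stage, i.e. $r \le m(I)$ times, each on an instance of size $O(m(I))$; this keeps everything polynomial.

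The steps, in order: (1) state precisely the "contracted stage instance" $I_t$ and prove, using Fact~\ref{fact:kruskal} and Lemma~\ref{lem:ska}, that its \OISMST solutions are in bijection with the admissible sets of weight-$v_t$ edges given the current component state; (2) prove the state after stage $t$ is well-defined (independent of which admissible set was chosen) — again Lemma~\ref{lem:ska}, third and fourth bullets; (3) assemble the $I_t$ into one \OISMST instance, or alternatively argue that $r$ sequential oracle calls suffice and that feasibility/infeasibility is detected correctly (if some $I_t$ is infeasible then, by Lemma~\ref{lem:selfred}, $I$ is infeasible); (4) check the size and running-time bounds.

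The main obstacle I expect is step~(1): making the contraction gadget faithfully mimic \SKA's behaviour across \emph{all} the $G_i$ at once. Within a single $G_i$, Kruskal in a stage just picks a spanning forest of the graph of current-weight edges on the contracted vertex set, which is easy to encode; the subtlety is that a weight-$v_t$ edge in $\bar G$ must be accepted or rejected consistently by every $K_i$ that sees it, and the $\{0,1\}$ encoding has to make "accepted by all" versus "rejected by all" the only non-failing outcomes — exactly as in the original instance. I will need to argue that giving the inter-stage structure weight $0$ and the current stage weight $1$ forces \SKA on $I'$ to reach the same component state before processing the weight-$1$ edges, so that the agreement/failure conditions coincide. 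Handling weight-$0$ edges of the original instance (the very first stage) is a mild special case of the same construction.
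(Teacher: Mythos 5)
Your overall strategy is the paper's strategy: process the distinct weights from lightest to heaviest, use Lemma~\ref{lem:selfred} (and Lemma~\ref{lem:ska}) to argue that the state reached after each stage is well defined and that per-stage infeasibility implies global infeasibility, and invoke a \OISMST oracle once per stage with the already-fixed light structure encoded as weight-0 edges and the current weight class as weight-1 edges. The paper phrases this as an induction on the number of distinct weights (solve the instance restricted to weights below the largest weight $w$, obtaining $T'$; then solve the \OISMST instance consisting of the edges of $T'$ at weight 0 together with the weight-$w$ edges at weight 1), which is exactly your sequential loop unrolled. However, your concrete encoding has a genuine gap in the step you yourself flagged. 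You contract, in each $G_i$, the components of the partial solution and justify this with ``these contractions are consistent on $\bar G$ because the $T_i$ agree there.'' That justification is false: two intersection vertices can end up in the same component of $T_i$ only via a path through the exclusive part of $G_i$, so the component partitions restricted to $\bar V$ generally differ from graph to graph even though the $T_i$ coincide on $\bar G$. After contraction the graphs no longer share a common induced subgraph, so the gadget $I_t$ is not a legal sunflower \OISMST instance and the oracle cannot be applied to it as such. The paper sidesteps this entirely by not contracting: it keeps exactly the chosen lighter edges $T'$ as weight-0 edges, discards all other lighter edges, and keeps the current-weight edges at weight 1; since $T'$ is a partial simultaneous spanning tree, every solution of the $\{0,1\}$ instance contains it, which achieves what your contraction was meant to achieve without disturbing the intersection structure.

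A second, smaller defect is the claim that the gadgets can be ``stacked'' into a single \OISMST instance $I'$: a $\{0,1\}$-weighted instance can encode only two weight classes, so the stage order $v_1<\dots<v_r$ cannot be represented in one such instance; moreover, taking the weight-0 part of gadget $t{+}1$ to be the union of \emph{all} earlier weight-1 parts (rather than only the edges actually chosen so far) destroys minimality, since a spanning tree minimizing the collapsed weights need not minimize the original ones. Your fallback --- at most $m$ sequential oracle calls, each on an instance of polynomial size --- is the correct route and is what the paper does; with the contraction replaced by the weight-0/$T'$ encoding, your argument becomes the paper's proof.
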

\begin{proof} 
First let us consider an instance of \SMST using at most two distinct values for weights. Then we can replace these by 0 and 1. From the point of view of the individual graphs, each subset of edges is a minimum spanning tree after the modification if and only if the same holds before the modification; and so the same applies to the simultaneous minimum spanning trees. 

We continue via induction. Let us have an algorithm based on any \OISMST algorithm that solves any \SMST instance with at most $k$ distinct values of weight. We will extend this algorithm to $k+1$ values. Let us have an instance that uses $k+1$ values and let $w$ denote the highest one. We restrict $G$ to $G'$ by restricting to edges lighter than $w$. We already know how to solve \SMST for $G'$, acquiring a partial solution $T'$ or showing that no solution exists in which case the original \SMST has no solution. If we have the solution $T'$, then according to lemma \ref{lem:selfred} $T'$ can be extended by some edges of weight $w$ to a full solution. 

We once again modify $G$ into $\bar{G}$ as follows. We restrict $G$ to edges from $T'$ and edges of weight $w$. We set the weight of all edges from $T'$ to 0 and the weight of the remaining edges to 1. We now have an instance of \OISMST such that any solution contains all the edges from $T'$ as they form a partial simultaneous spanning tree and the \SKA would accept all of the edges regardless of the universal order used. Let $\bar{T}$ be a solution of the \OISMST problem on $\bar{G}$, then $\bar{T}$ is also a solution of the original \SMST problem and the algorithm outputs $\bar{T}$, otherwise we answer "no". 

To show completeness, suppose that there exists a solution $T$. Then we necessarily obtain $T'$ in the first step and $T'$ can be extended to a solution of the whole problem (not necessarily $T$) and thus the \OISMST on $\bar{G}$ has a solution $\bar{T}$. \qed
\end{proof}

\begin{definition}
An {\em Intersection-Heavy Simultaneous $\{0,1\}$ Minimum Spanning Tree} problem, or \POISMST in short, is an instance of \SMST where we restrict all the edge weights to be either 0 or 1. Furthermore all the edges of weight 1 are only in the intersection of all the individual graphs. 
\end{definition}

The motivation behind this restriction comes from a simple observation. 

\begin{observation}\label{obs:PIOSMST}
Let $I$ be an instance of \OISMST (for any number of graphs) where no edges of weight 1 appear in the intersection. Then after solving the first stage, the \SKA algorithm always finishes for any universal order $\pi$
\end{observation}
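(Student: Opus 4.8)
The plan is to exploit the fact that an \OISMST instance has exactly two stages --- one processing all weight-$0$ edges and one processing all weight-$1$ edges --- together with the only mechanism by which \SKA can fail: a failure occurs precisely when at least two invoked Kruskal instances disagree on whether to accept the edge currently being processed. So, assuming the first stage has been completed successfully (which is the hypothesis ``after solving the first stage''), it suffices to show that no edge encountered in the second stage can trigger such a disagreement.

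First I would observe that, by assumption, every edge of weight $1$ lies outside the intersection $\bar{G}$. Hence each such edge $e$ belongs to a unique graph $G_j$, and by the description of \SKA the edge $e$ is then presented to the single instance $K_j$ only. Consequently $K_j$ either accepts $e$ (and \SKA adds it to $T$) or rejects it (and \SKA discards it); in neither case is there a second invoked instance whose decision could differ, so \SKA cannot fail on $e$.

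Since this holds for every weight-$1$ edge, and independently of where that edge sits within the universal order, once the first stage has been solved successfully \SKA processes the entire second stage without failing and therefore terminates. The argument uses nothing about $\pi$ beyond the fact that all weight-$0$ edges precede all weight-$1$ edges, which every non-decreasing universal order satisfies, so the conclusion indeed holds for any $\pi$. (If desired, one may additionally invoke the third bullet of Lemma~\ref{lem:ska} to note that the components reached after the first stage are identical for all $\pi$, but this is not needed for the claim that the algorithm merely finishes.)

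There is essentially no hard step here; the only point requiring care is pinning down the precise failure condition of \SKA --- namely that a failure requires two or more \emph{invoked} instances to disagree --- which is immediate from the construction but must be stated explicitly for the argument to be airtight.
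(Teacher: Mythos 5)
Your argument is correct and is essentially identical to the paper's proof: both rest on the observation that every weight-$1$ edge is exclusive to a single graph, hence is presented to only one Kruskal instance, so no disagreement (and thus no failure) can occur after the first stage. Nothing is missing.
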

\begin{proof}
This is easy to see as each edge of weight 1 will be presented by the \SKA to a single instance of the Kruskal's algorithm and therefore in no step can the algorithm fail (get two opposite answers). Furthermore, one can see that the order of edges of weight 1 no longer matters, though different orders may give different solutions. \qed
\end{proof}

This observation formalizes an intuition that it is in some sense harder to deal with weight~1 edges in the intersection than in the exclusive parts. 

It might therefore seem that to solve a \OISMST problem, one might first greedily find a subset of edges from the intersection and then extend it to the exclusive parts. This approach fails on a simple example. Let us have exactly four vertices $a_1,a_2,b_1,b_2$ in the intersection. Let $G_1$ contain four weight~0 edges $a_1c_1, a_2c_2, b_1d_1, b_2d_2$, and let $G_2$ contain two weight~0 paths $P_i$ connecting $a_i$ and $b_i$ for both values of $i$. Finally let $a_1a_2$, $b_1b_2$ and $c_1c_2$ be weight~1 edges where the last one is exclusive for $G_1$. Clearly the only solution takes exactly the weight-1 edges $b_1b_2$ and $c_1c_2$. However if the graphs contains $d_1d_2$ rather than $c_1c_2$ then picking the edge $b_1b_2$ is not correct. Therefore an algorithm may not be oblivious to the exclusive parts. 
 
It seems logical to also consider the opposite approach, that is to first solve the exclusive parts where the solution seems rather fixed and then exploit the information from exclusive parts to extend the partial solution to the intersection. It is no surprise that this approach is flawed as well. As an example, let us have two graphs $G_1$ and $G_2$ where $G_1$ is only one edge $xy$ and $G_2$ is a triangle $xyz$. If we were to first find a maximum acyclic set of each exclusive part, we would get the subset $\{xz,yz\}$. However now we cannot extend this subset into a solution as there are only two solutions $\{xy,yz\}$ and $\{xy,xz\}$. 

Both of these greedy approaches to a \OISMST are flawed, even under the assumption that we are able to solve the first stage correctly in polynomial time. However according to the observation \ref{obs:PIOSMST} limiting all of the edges of weight 1 to the intersection gives instances that are in some sense easier, as the hardness of the problem is focused in the intersection which can be solved without considering exclusive weight~1 edges, as there are none. Later we show that \POISMST is actually equivalent to \OISMST, which will be a key step in solving the \OISMST problem. 

\begin{definition}
A {\em Simultaneous Spanning Tree} problem, \SST in short, is an unweighted version of the \SMST problem, in other words a \SMST problem using only one weight. 
\end{definition}

The \SST is clearly at most as hard problem as all of the previous versions of the \SMST and is an interesting problem on its own. We use the \SST as a simple base case in our construction later on. 

\begin{observation}
\SST $\subseteq$ \POISMST $\subseteq$ \OISMST $\subseteq$ \SMST 
\end{observation}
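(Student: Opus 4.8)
The plan is to check the three containments one at a time, reading $A \subseteq B$ as ``there is a polynomial-time reduction from $A$ to $B$'', and exhibiting in each case an essentially trivial such reduction.

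The two rightmost inclusions need no work. A \OISMST instance \emph{is}, by definition, a \SMST instance --- one in which every weight happens to lie in $\{0,1\}$ --- and the identity map is a polynomial-time reduction under which the set of valid solutions is literally unchanged; this gives \OISMST $\subseteq$ \SMST. Likewise a \POISMST instance is by definition a \OISMST instance carrying the extra promise that every weight-$1$ edge lies in $\bar G$; forgetting this promise leaves the instance (and its solution set) untouched, so the identity reduction witnesses \POISMST $\subseteq$ \OISMST.

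The only inclusion with any content is \SST $\subseteq$ \POISMST. Given an \SST instance, all edges share a single common weight $c$. I would first record the elementary fact that when all edge weights of a graph are equal, its minimum spanning trees are exactly its maximal acyclic subgraphs: the weight of a spanning forest then depends only on its number of edges, which is the same for every maximal acyclic subgraph of the graph (and the sunflower structure guarantees this is applied consistently across $G_1,\dots,G_k$ and $\bar G$). Hence the MST condition --- and therefore the condition of being a simultaneous MST agreeing on $\bar G$ --- is insensitive to the value of $c$, and we may relabel every weight to $0$. The resulting instance has no weight-$1$ edges whatsoever, so the \POISMST requirement that weight-$1$ edges lie in the intersection holds vacuously; it is a legal \POISMST instance with exactly the solution set of the original, produced in linear time.

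I do not anticipate a genuine obstacle here: all three inclusions are essentially syntactic. The only points that require (minimal) care are the observation that uniform-weight minimum spanning trees coincide with maximal acyclic subgraphs --- which underlies the last reduction --- and the routine check that the ``coincide on $\bar G$'' requirement and the sunflower shape of the input are carried over verbatim by each of the trivial transformations above.
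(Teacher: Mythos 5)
Your proposal is correct and matches the paper's intent: the paper states this chain without any explicit proof, treating all three containments as immediate from the definitions (each class of instances is a syntactic special case of the next, with the single weight of an \SST instance relabelled to $0$ so that the weight-$1$ condition of \POISMST holds vacuously). Your added care about uniform-weight MSTs being exactly the maximal acyclic subgraphs is a reasonable way to make the relabelling step explicit, but it is the same essentially trivial argument.
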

\section{Case $k \geq 3$ is \NP-complete}\label{sec:NP}

\begin{problem}[3D matching]
Let $U,V,W$ be disjoint finite sets such that $|U|=|V|=|W|=k$ and let $T$ be a subset of $U\times V\times W$. Is there a set $M\subset T$ with $|M|=k$, such that for any $x\in U\cup V\cup W$ there is exactly one hyperedge $e\in M$ such that $x\in e$.
\end{problem}
\begin{fact}[\cite{b:karp}]
3D matching is \NP-complete.
\end{fact}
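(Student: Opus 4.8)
The plan is to establish the two standard components of an \NP-completeness proof: membership in \NP and \NP-hardness via a polynomial-time reduction from a problem already known to be \NP-complete. For membership, I would observe that a candidate set $M \subseteq T$ serves as a certificate: verifying $|M| = k$ and that every $x \in U \cup V \cup W$ lies in exactly one chosen hyperedge takes time linear in $|M| + |U| + |V| + |W|$. Hence 3D matching is in \NP.

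For hardness I would reduce from 3SAT, which is \NP-complete by the Cook--Levin theorem. Given a formula with variables $x_1, \dots, x_n$ and clauses $C_1, \dots, C_m$, the construction builds three disjoint ground sets and a system of triples from three kinds of gadgets. First, a \emph{truth-setting} gadget per variable: a ring of internal elements admitting exactly two perfect coverings, one interpreted as $x_i = \mathrm{true}$ and the other as $x_i = \mathrm{false}$; the chosen covering determines which of the literal ``tip'' elements $x_i[j]$ and $\bar x_i[j]$ are left uncovered and thus made available to the clause gadgets. Second, a \emph{clause-testing} gadget per clause, consisting of two fresh elements together with one triple per literal occurring in the clause; this pair can be covered if and only if at least one of its literals' tips is still available, i.e. iff the clause is satisfied. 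Third, \emph{garbage-collection} gadgets whose triples can absorb any remaining available tips, so that a complete matching exists exactly when every clause is satisfiable.

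I would then argue equivalence in both directions. If the formula is satisfiable, the assignment selects per variable one of the two ring coverings, each clause gadget takes a triple through a satisfied literal's tip, and the garbage collectors absorb all leftover tips, yielding a perfect 3-dimensional matching. Conversely, any perfect matching must cover each variable's ring in one of its two consistent ways, inducing a well-defined assignment, and the fact that every clause gadget is covered forces at least one satisfied literal per clause. Finally I would verify that the total number of elements and triples is polynomial in $n + m$ and that the whole instance is computable in polynomial time, which completes the reduction.

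The main obstacle is the careful design of the truth-setting ring so that it admits \emph{exactly} the two intended perfect coverings and no spurious ones, together with the global counting argument: the number of tips produced by the truth-setting gadgets, consumed by clause gadgets, and absorbed by garbage collectors must balance precisely, so that a perfect matching exists if and only if all clauses are satisfiable. A single off-by-one in the number of tip pairs or collector gadgets breaks the equivalence, so pinning down these counts is the delicate part; everything else is routine bookkeeping once the gadget and the global balance are fixed.
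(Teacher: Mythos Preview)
Your sketch is correct and follows the standard Garey--Johnson construction (truth-setting rings, clause gadgets, garbage collectors), which is the canonical textbook route and essentially what Karp's original reduction does. The paper, however, does not prove this statement at all: it is stated as a \emph{Fact} with a citation to~\cite{b:karp} and no proof is given, since the result is classical and used only as a black box for the subsequent reduction to \POISMST. So your proposal supplies strictly more than the paper does; there is nothing to compare on the paper's side beyond the citation.
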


\begin{theorem}
The problem of 3D matching can be polynomially reduced to \POISMST problem for 3 graphs. 
\end{theorem}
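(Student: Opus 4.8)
The plan is to reduce from 3D matching by building a \POISMST instance on three graphs $G_1,G_2,G_3$ whose common intersection $\bar G$ carries all the weight-$1$ edges --- one edge $e_t$ per hyperedge $t\in T$ --- with the convention that $e_t$ is \emph{omitted} from the simultaneous minimum spanning tree iff $t$ is put into the matching $M$. The three coordinates of the triples are assigned to the three graphs: $G_1$ will force that for every $u\in U$ exactly one hyperedge containing $u$ is chosen into $M$, and $G_2,G_3$ will do the same for $V$ and $W$. Concretely, for every $t\in T$ I put two vertices $a_t,b_t$ into $\bar V$ and let $e_t=a_tb_t$ be a weight-$1$ edge; these are all the edges of $\bar G$, so $\bar G=G_i[\bar V]$ for each $i$, as \POISMST demands. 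Every other edge will have weight $0$, and the whole instance will have $O(|T|)$ vertices and edges, hence be computable in polynomial time.

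The gadget for an element $u\in U$ is a ``complement cycle''. Writing $T_u=\{t\in T:u\in t\}=\{t_1,\dots,t_m\}$ in some fixed order, I add to $G_1$ fresh exclusive vertices $c_1,\dots,c_m$ together with weight-$0$ edges $b_{t_i}c_i$ and $c_ia_{t_{i+1}}$ for $i=1,\dots,m$ (indices mod $m$); together with $e_{t_1},\dots,e_{t_m}$ these form a single cycle of length $3m$ on $\{a_{t_i},b_{t_i},c_i\}$. The reason for routing through the private vertices $c_i$ instead of the shorter edges $b_{t_i}a_{t_{i+1}}$ is the induced-intersection condition of \POISMST: an edge between two vertices of $\bar V$ would belong to $\bar G$ and hence to all three graphs, whereas the edges of a $u$-gadget must be private to $G_1$. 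I would do this for every $u\in U$, so that $G_1$ is a disjoint union of such cycles, and build $G_2$ symmetrically from cycles indexed by $V$ (with its own private vertices) and $G_3$ from cycles indexed by $W$; all three graphs re-use the same vertices $a_t,b_t$ and the same edges $e_t$.

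To check correctness I would fix $i=1$ (the argument for $G_2,G_3$ is identical). The weight-$0$ edges of $G_1$ form a forest --- each $u$-gadget contributes $m$ vertex-disjoint paths $b_{t_j}\,c_j\,a_{t_{j+1}}$ --- so by the properties of Kruskal's algorithm (Fact~\ref{fact:kruskal}) every minimum spanning tree of $G_1$ contains all of them; inside a $u$-gadget this leaves $m$ components cyclically joined by $e_{t_1},\dots,e_{t_m}$, so a minimum spanning tree keeps exactly $m-1$ of those, and \emph{any} choice of the one discarded edge gives a spanning tree. Hence the minimum spanning trees of $G_1$ are precisely the sets comprising all its weight-$0$ edges together with, for every $u\in U$, all but one edge of $\{e_t:t\in T_u\}$; likewise for $G_2$ over the $V$-groups and $G_3$ over the $W$-groups. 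Since every $e_t$ lies in $\bar G$ and the weight-$0$ part of each tree is forced, a triple of minimum spanning trees agreeing on $\bar G$ is exactly one set $F\subseteq\{e_t:t\in T\}$ admissible in all three graphs at once; putting $M=\{t:e_t\notin F\}$, admissibility says exactly that $M$ meets each $U$-, $V$- and $W$-group in a single hyperedge, i.e.\ that every element of $U\cup V\cup W$ lies in exactly one hyperedge of $M$, and since the $U$-groups partition $T$ this forces $|M|=k$ --- so $M$ is a perfect 3D matching, and conversely any such $M$ gives the admissible $F=\{e_t:t\notin M\}$. The only situation to special-case is when some group $T_x$ is empty: then there is trivially no 3D matching, and since no $x$-gadget would be built I would instead output a fixed infeasible \POISMST instance (for instance $\bar G$ being a single weight-$1$ edge $xy$ shared by all three graphs, plus one private vertex $z$ joined to $x$ and to $y$ by weight-$0$ edges in $G_1$ only, which forces $xy\notin T_1$ but $xy\in T_2$).

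The step I expect to be the crux is the gadget itself: it must enforce \emph{exactly one} chosen hyperedge per element while respecting that no weight-$1$ edge --- indeed no edge between two intersection vertices --- may lie outside $\bar G$. The cycle gadget forces exactly one of its edges $e_t$ to be dropped: ``at least one dropped'' because these edges, together with the (forced) weight-$0$ edges of the gadget, lie on a common cycle, and ``at most one dropped'' because the gadget is a connected component of $G_1$ and the tree must span it. Realising this while keeping the private edges out of the intersection is what forces the complement encoding together with the rerouting through private vertices; once the gadget is in place, the remainder is bookkeeping plus a routine size and running-time estimate.
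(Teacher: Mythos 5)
Your reduction is correct, but it is a genuinely different construction from the paper's. The paper encodes hyperedges as the rays $cv_e$ of a single weight-$1$ star inside the intersection and encodes a hyperedge's \emph{selection} into the matching as the \emph{inclusion} of its ray in the simultaneous tree; each element $x$ becomes an exclusive vertex $v_x$ joined by weight-$0$ edges to all $v_e$ with $x\in e$, so that connectivity of $v_x$ to $c$ forces at least one selected ray per element and acyclicity (through the weight-$0$ star at $v_x$) forces at most one. You instead place a disjoint weight-$1$ edge $a_tb_t$ per hyperedge in the intersection, use a complement encoding (a hyperedge enters the matching iff its edge is \emph{omitted}), and enforce ``exactly one omitted per element'' by a cycle gadget per element routed through private connector vertices; the rerouting you insist on is indeed necessary to keep the intersection induced, and your case analysis (all weight-$0$ edges forced since they form a forest, each cycle component drops exactly one edge, minimality forces it to be a weight-$1$ edge) is sound, including the degenerate $|T_u|=1$ triangle and the empty-group special case, which the paper instead dismisses by a without-loss-of-generality assumption. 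What each approach buys: the paper's gadget is slightly smaller (one vertex per hyperedge plus one per element, and each $G_i$ is connected), and its inclusion encoding makes the correspondence with the matching immediate; your version needs two intersection vertices per hyperedge and one connector per incidence and produces disconnected $G_i$'s, so it leans on the paper's convention that spanning trees of disconnected graphs are maximal acyclic subgraphs (spanning forests) --- which the paper explicitly adopts, so this is legitimate --- while in exchange the ``exactly one per group'' constraint falls out of a single clean cycle argument rather than separate connectivity and acyclicity arguments. Both reductions are linear in $|T|$ and establish the theorem.
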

\begin{proof}
Without loss of generality we assume that every element of $U,V$ and $W$ is element of at least one hyperedge in $T$, otherwise the original 3D matching trivially has no solution. 

We define graphs $G_1,G_2,G_3$ and $H$ where $H = G_1 \cap G_2 \cap G_3$ forming a "sunflower" intersection, that is $H = G_i \cap G_j$ for each $i \neq j$. We associate $G_1$ with $U$, $G_2$ with $V$ and $G_3$ with $W$. 

First put a central vertex $c$ in $H$. For each hyperedge $e\in T$, put a vertex $v_e\in H$ and connect it to $c$ by an edge in $H$ of weight 1. For each element $x\in U$, put a vertex $v_x$ into the exclusive part of $G_1$ ($G_1 \backslash H$) and for every $e \in T$ such that $x \in e$, connect $v_e$ and $v_x$ by an edge of weight 0. Do the same for $V$ and $W$ with graphs $G_2$ and $G_3$ respectively. By construction these graphs form the required "sunflower" configuration. 

The structure of the graph $H$ can be alternatively described as follows. The intersection $H$ contains exactly a star with center $c$ and all edges of weight 1 where each ray represents a different element from $T$. 

Let us focus on $G_1$ and $U$, for the other graphs and sets the arguments are symmetrical. The graph $G_1$ is composed of the central star and exclusive vertices representing elements of the associated set $U$. Every vertex representing an element $x$ is connected via edges of weight 0 to all vertices representing the hyperedges that contain $x$. So for every element $x$, $v_x$ is a center of a weight-0 star in $G_1$. All of these weight-0 stars are disjoint as in each hyperedge there is at most one element from $U$. Since all the edges of weight 0 form an acyclic subgraph of $G_1$, every solution of this \SMST instance must contain all of them. Let $S$ be a solution of the \SMST problem. As for each $x \in U$, the $v_x$ is in the same component as $c$ in $G_1$, it must also be in the same component of $S[E(G_1)]$ and therefore at least one edge $cv_e \in S$ for some hyperedge $e$ such that $x \in e$. If it happened that $cv_f \in S$ for some other hyperedge $f$ with $x \in f$, then $cv_e,v_ev_x,v_xv_f,v_fc$ form a cycle in $G_1$ and we get a contradiction. 

This means that the hyperedges represented by the edges (where $e$ is represented by edge $cv_e$) of weight 1 in $S$ are a solution of the 3D-matching. This is true as each $x \in U$ belongs to exactly one of the hyperedges from $S$ and the same applies to every $y \in V$ and every $z \in W$. 

On the other hand, let $M$ be a solution of the 3D-matching. Then we can construct a solution of the \SMST by simply picking all the edges of weight 0 and all the edges of weight 1 that represent the hyperedgesedges from $M$. As previously, we observe that everything in $G_1$ is connected into a single component. If we only consider the edges of weigh 0 on the other hand, then for each $x,y \in U$ the vertices $v_x$ and $v_y$ are in distinct components and can only be connected via the central star. Therefore any solution must connect $G_1$ into a single component using at least $|U|$ edges of weight 1. Since $|U| = |M|$, the solution of the \SMST constructed from $M$ is clearly minimal. \qed
\end{proof}

\begin{corollary}
The problem \SMST and its variants \OISMST and \POISMST are \NP-complete for 3 and more graphs.
\end{corollary}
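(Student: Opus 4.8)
The plan is to assemble the corollary from two halves: \NP-hardness, supplied by the reduction in the theorem just proved, and membership in \NP, supplied by Lemma~\ref{lem:ska}. Since \POISMST is the most restricted of the three problems, it suffices to show that \POISMST is \NP-hard and that \SMST lies in \NP; the chain \SST $\subseteq$ \POISMST $\subseteq$ \OISMST $\subseteq$ \SMST then upgrades each statement to all three problems, giving \NP-completeness for each.

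For \NP-hardness, the theorem gives a polynomial-time reduction from 3D matching (which is \NP-complete) to \POISMST with exactly three graphs, so \POISMST --- and hence \OISMST and \SMST --- is \NP-hard for $k=3$. To obtain hardness for every $k>3$ (whether $k$ is fixed or given with the input), I would pad the instance $(G_1,G_2,G_3,H)$ produced by the reduction with $k-3$ further graphs $G_4,\dots,G_k$, each a fresh copy of $G_1$ sharing with the rest of the instance exactly the star $H$ (so the copy is $H$ together with weight-$0$ edges running from renamed exclusive vertices to the hyperedge-vertices $v_e$). The sunflower structure is preserved, all weight-$1$ edges still lie in $H$, and a short repetition of the argument from the theorem shows that each added copy imposes on the chosen weight-$1$ edges exactly the same condition that $G_1$ already imposes; hence the set of solutions, and in particular feasibility, is unchanged, and the whole reduction is polynomial.

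For membership in \NP, Lemma~\ref{lem:ska} states that \NSKA is complete and correct, and \NSKA is by construction a nondeterministic polynomial-time procedure: it guesses a universal order $\pi$ (a permutation of the $O(\sum_i|E_i|)$ edges, of polynomial size) and then deterministically simulates $k$ runs of Kruskal's algorithm. Hence \SMST $\in$ \NP, and so are its special cases \OISMST and \POISMST. Equivalently, one may take the solution edge set $T$ itself as a certificate and verify in polynomial time that its restriction to each $G_i$ is a minimum spanning tree of $G_i$ (check it is a spanning forest and compare its weight with one produced by Kruskal), the agreement on $\bar G$ being automatic because $T$ is a single set of edges. Combining the two halves yields \NP-completeness of \POISMST, \OISMST and \SMST for all $k\geq 3$.

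I expect no serious obstacle; the only mildly delicate point is the padding for $k>3$, where one must confirm that appending copies of $G_1$ keeps the instance a legal sunflower \POISMST instance and leaves its solution set intact --- both of which are immediate from the structure of the graphs constructed in the theorem.
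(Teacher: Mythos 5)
Your proposal is correct and matches the route the paper intends: the corollary is stated without an explicit proof, resting on the 3D-matching reduction for hardness, and you correctly supply the routine missing pieces (membership in \NP via a certificate or \NSKA, and the containment chain \SST $\subseteq$ \POISMST $\subseteq$ \OISMST $\subseteq$ \SMST to transfer hardness upward and membership downward). Your padding with extra copies of $G_1$ for $k>3$ is a valid way to preserve the sunflower structure and the solution set, which the paper leaves implicit.
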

\section{Case $k = 2$ is in \PP}

In this section we show that the general \SMST problem is polynomially solvable. We progress via a tandem of reductions. We already know that the general \SMST can be solved using an algorithm for \OISMST for a cost of some polynomial factor. We further reduce instances of \OISMST to tasks that are more orderly and symmetrical in some sense. We then use this to reduce the task to \POISMST. Finally, we show that solving \POISMST can be reduced to a problem of intersection of two matroids, which is a polynomial problem for two graphs. As an intermediate step, we will also solve the \SST problem by reduction to a matroid intersection problem. 

\begin{definition}
Let $G_1$ and $G_2$ be two graphs intersecting in a common induced subgraph and let $F$ be a subset of edges of $G_1$ and $G_2$. We say that $F$ is {\em simultaneously acyclic} if $F$ restricted to each of the two graphs $G_1$ and $G_2$ forms an acyclic subgraph. 
\end{definition}

\subsection{Reduction of 2-graph \OISMST to 2-graph \POISMST}\label{sec:reductioncap01smst}

For technical reasons we first want to get rid of all edges of weight 1 that cross the boundary in between the intersection and one of the exclusive parts. 

\begin{observation}\label{obs:OInotransedges}
Every \OISMST instance can be transformed to an instance where all of the edges of weight 1 have either both ends in the intersection or both ends in an exclusive part of one graph. This transformation at most doubles the number of edges and vertices. 
\end{observation}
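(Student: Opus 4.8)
The plan is to eliminate each weight-1 edge $e = uv$ that straddles the boundary, say with $u \in \bar V$ (the intersection) and $v$ in the exclusive part of $G_1$, by subdividing it. Concretely, I would introduce a brand-new vertex $m_e$ placed in the exclusive part of $G_1$, delete the edge $uv$, and add two edges: $u m_e$ of weight $1$ and $m_e v$ of weight $0$. The edge $m_e v$ lies wholly inside the exclusive part, and $u m_e$ is incident to the new exclusive vertex $m_e$, so it too has both endpoints in the exclusive part of $G_1$. Doing this for every crossing weight-1 edge removes all of them; since each such edge adds one vertex and replaces one edge by two, the number of vertices grows by at most the number of crossing edges ($\le m$) and the number of edges grows by at most the same amount, so both quantities at most double. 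Note this operation does not disturb the sunflower structure: all new vertices and edges live strictly inside one exclusive part, so the common intersection $\bar G$ is unchanged.

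Next I would argue the transformation preserves solvability and lets us recover a solution. The key point is that in the modified instance, the weight-0 edge $m_e v$ is a bridge (in fact $m_e$ has degree $1$ among weight-0 edges, being a fresh vertex whose only other edge $u m_e$ has weight $1$), hence by the same reasoning used throughout the paper every simultaneously acyclic maximum-weight subgraph — equivalently, every solution of the \SMST instance — must contain $m_e v$. So in any solution $\bar T$ of the new instance, $m_e$ is connected to $v$ for free, and the edge $u m_e$ is present in $\bar T$ iff $u$ and $v$ would have been joined directly in the original instance. This gives a weight- and connectivity-preserving bijection between solutions: contract every forced edge $m_e v$ (identifying $m_e$ with $v$) to turn $u m_e$ back into $uv$, and conversely subdivide. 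One checks that a set of edges is simultaneously acyclic and of minimum total weight (i.e. spans the same components on each $G_i$ as a spanning tree does) before the transformation iff its image is so afterwards, because contracting a bridge changes neither acyclicity nor the component structure relevant to any $G_i$.

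Finally I would package this: run the subdivision preprocessing, solve the resulting instance (which has no crossing weight-1 edges), and map the solution back by contracting the forced weight-0 edges. All steps are clearly polynomial. The main thing to be careful about — the only real obstacle — is verifying the correspondence of solutions rigorously in the degenerate cases the paper warns about: the $T_i$ need not induce spanning trees on $\bar G$ and $\bigcup T_i$ need not be acyclic on $\bigcup G_i$, so I must make sure the bijection argument is phrased purely in terms of ``simultaneously acyclic of maximum weight'' on each individual graph, rather than in terms of global acyclicity. Since subdividing an edge $uv$ by a degree-2 vertex and then always including one of the two new edges is transparently equivalent, on each graph separately, to including-or-not the original edge, this reduces to a routine check, and there is no genuine difficulty beyond bookkeeping.
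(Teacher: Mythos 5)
Your overall plan --- subdivide each boundary-crossing weight-1 edge with a fresh vertex placed in the exclusive part, force one of the two new edges into every solution, and translate solutions back and forth --- is exactly the paper's approach, but your weight assignment is reversed, and with it the construction fails to achieve the stated goal. You give $u m_e$ weight $1$ and $m_e v$ weight $0$. The edge $u m_e$ has one endpoint $u \in \bar{V}$ (the intersection) and one endpoint $m_e$ in the exclusive part of $G_1$; your claim that it ``has both endpoints in the exclusive part of $G_1$'' is simply false, since $u$ stays an intersection vertex. So after your transformation the instance still contains a weight-1 edge with one end in the intersection and one end in an exclusive part --- precisely the kind of edge the observation is supposed to eliminate --- and nothing has been gained.

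The fix is to swap the weights, which is what the paper does: the crossing half $u m_e$ (from the intersection vertex to the new exclusive vertex) gets weight $0$ --- harmless, since only weight-1 crossing edges are forbidden --- and the half $m_e v$, which lies entirely inside the exclusive part, gets weight $1$ and plays the role of the original edge $uv$. Your forcing argument (the fresh degree-2 vertex makes the weight-0 half belong to every minimum solution, and the weight-1 half is chosen iff $uv$ would have been) and the solution correspondence then go through essentially verbatim, and the size bound is as you state. As written, though, the construction does not prove the observation.
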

\begin{proof}
This can be achieved by a simple operation that shifts the edges into the exclusive parts. We take each edge $xy$ of weight 1 such that $x$ is in the intersection and $y$ in the exclusive part of one of the two graphs. We subdivide $xy$ into two edges $xz$ and $zy$ where the vertex $z$ lies in the exclusive part of the relevant graph. We set the weight of $xz$ to 0 and the weight of $zy$ to 1. Since the vertex $z$ has degree two, the 0-weight edge $xz$ is an element of each solution of the new \OISMST. It is now easy to see that we can construct the solution of the original \OISMST instance from any solution of the new instance by removing $xz$ and substitution of $zy$ with $xy$ (if it is part of the solution). \qed
\end{proof}

Another issue is that each of the two graphs may require a different number of edges of weight 1, while each edge from the intersection would increase the size of both solutions. 

\begin{observation}\label{obs:OIsamesizesolutions}
Every instance of \OISMST with two graphs $G_1$ and $G_2$ can be transformed into an instance where every minimum spanning tree of $G_1$ and every minimum spanning tree of $G_2$ contain the same number of edges of weight 1. This transformation at most doubles the number of edges and vertices. 
\end{observation}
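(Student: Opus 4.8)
The plan is to first record that for a single graph $G_i$ with weights in $\{0,1\}$, the number of weight-$1$ edges is the same in every minimum spanning tree; call this number $n_i$. This follows directly from Fact~\ref{fact:kruskal}: running Kruskal on any non-decreasing order processes all weight-$0$ edges first, and by the ``components span the same vertices after every stage'' property the accepted weight-$0$ edges always form a spanning forest of the weight-$0$ subgraph of $G_i$, so their number is a fixed quantity depending only on $G_i$. Since a minimum spanning tree of $G_i$ has a fixed total number of edges as well (again it is a maximal acyclic subgraph, hence a spanning forest of $G_i$), exactly $n_i$ of its edges have weight $1$, independent of the order and hence of the chosen tree. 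In particular $n_1$ and $n_2$ are well defined and computable in polynomial time.

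Assume without loss of generality $n_1 \ge n_2$. The transformation adds to the exclusive part of $G_2$ a vertex-disjoint path $Q$ on $n_1 - n_2 + 1$ fresh vertices, all of whose $n_1 - n_2$ edges have weight $1$; the graph $G_1$ and the intersection $\bar G$ are left untouched. Every edge of $Q$ is a bridge of its own component, so it lies in every spanning forest of the new $G_2$; hence every minimum spanning tree of the new $G_2$ consists of all of $Q$ together with a minimum spanning tree of the old $G_2$ (the new component being independent of the rest), so it uses exactly $n_2 + (n_1 - n_2) = n_1$ weight-$1$ edges. Thus after the transformation both graphs have the property that all of their minimum spanning trees use $n_1$ weight-$1$ edges.

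Next I would check that the transformation preserves the \OISMST instance up to a trivial correspondence of solutions. Since the edges of $Q$ are exclusive to $G_2$ and each is forced into every local spanning tree, the \SKA presents them only to $K_2$, which always accepts them, so they neither cause nor prevent a failure and do not interact with the intersection. Consequently $T'$ is a solution of the new instance if and only if $T' = T \cup E(Q)$ for some solution $T$ of the original instance; in particular feasibility is preserved, which is all the later reduction needs.

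Finally there is the size bound: we add $n_1 - n_2 + 1 \le n_1 + 1$ vertices and $n_1 - n_2 \le n_1$ edges. As the weight-$1$ edges of a minimum spanning tree of $G_1$ are a subset of its edges, $n_1 < |V_1| \le |V(G)|$ and $n_1 \le |E(G_1)| \le |E(G)|$, so at most $|V(G)|$ vertices and $|E(G)|$ edges are added and the instance at most doubles in size. I expect the only mildly delicate points to be anchoring the well-definedness of $n_i$ cleanly on Fact~\ref{fact:kruskal}, and choosing the padding to be one path rather than many isolated weight-$1$ edges so that the doubling bound survives even when $G_2$ is tiny compared to $G_1$; everything else is routine.
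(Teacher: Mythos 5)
Your proof is correct and follows essentially the same approach as the paper: both first observe (via the Kruskal/\SKA stage properties) that the number of weight-1 edges in every minimum spanning tree of each $G_i$ is a fixed quantity, and both then pad the exclusive part of the lighter graph with $n_1-n_2$ forced weight-1 edges, the only difference being the gadget shape --- the paper attaches that many weight-1 leaves to an existing vertex of the exclusive part of $G_2$, while you add a disjoint path on fresh vertices, which incidentally also avoids having to assume the exclusive part of $G_2$ is nonempty. The solution correspondence and the doubling bound are argued just as in the paper.
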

\begin{proof}
Let $G$ denote the union of $G_1$ and $G_2$ and let $\bar{G}$ denote their intersection. From the properties of the \SKA (lemma \ref{lem:ska}) we know that we can determine beforehand the components of $G_1$ and $G_2$ after all the edges of weight 0 are processed and after all the edges of weight 1 are processed. We also know that in order to compute the restriction of the solution to the edges of weight 1 we do not need to know the exact choice of edges of weight 0, they are in fact independent. By considering the number of components of $G_1$ and $G_2$ just after processing all the edges of weight 0 and after processing all edges, we deduce how many edges of weight 1 must be added into the minimum spanning tree of each graph, which is equal to the difference of the two values. 

Suppose that the solution of \OISMST must contain $j_1$ edges of weight 1 from the graph $G_1$ and $j_2$ edges of weight 1 from the graph $G_2$. If $j_1 = j_2$ then we do not need to modify the instance, otherwise without loss of generality $j_1 > j_2$. We pick an arbitrary vertex $v$ from the exclusive part of $G_2$ and extend $G_2$ by $j_1-j_2$ leaves attached to $v$. All the leaves are new vertices and lie in the exclusive part of $G_2$; and all of the new edges have weight 1. Every spanning tree of $G_2$ must now contain all of these edges, while every solution of the original instance can be extended by exactly these edges. After this modification, $j_1 = j_2'$ where $j_2'$ denotes the new number of weight-one edges in the graph $G_2$ after modification. Note that this construction also works for the case  $j_2 = 0$, although this can be solved directly using \SKA. \qed
\end{proof}

\begin{lemma}\label{lem:OItoPOI}
The \OISMST problem for $k = 2$ is polynomially reducible to \POISMST problem for $k=2$ of asymptotically at most quadratic size. Furthermore if the set of edges of weight 0 of the original \OISMST instance is simultaneously acyclic, then the same is true for the new \POISMST instance. 
\end{lemma}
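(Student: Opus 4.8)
The plan is to first put the \OISMST instance into a normal form and then relocate the exclusive weight-$1$ edges into the intersection. Using Observation~\ref{obs:OInotransedges} I may assume every weight-$1$ edge lies wholly inside the intersection or wholly inside one exclusive part, and using Observation~\ref{obs:OIsamesizesolutions} I may assume the (determined) number of weight-$1$ edges in a minimum spanning tree is a common value $j$ for $G_1$ and $G_2$. By Lemma~\ref{lem:ska} I can also precompute the weight-$0$ components of each $G_i$; these are the vertices of the contracted weight-$1$ graphs $G_1^{c}, G_2^{c}$, and a solution of the instance is precisely a spanning forest of $G_i^{c}$ for each $i$ that agrees on the edges of the intersection. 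I would also delete any weight-$1$ self-loop created by the contraction. The only thing keeping the instance from being a \POISMST instance is the exclusive weight-$1$ edges, and the whole job is to move them into the intersection without changing feasibility.

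For the construction I introduce, for each weight-$0$ component $C$ of $G_i$ incident to an exclusive weight-$1$ edge, a fresh ``proxy'' vertex $p_C$ placed in the new intersection and attached to $C$ by a single weight-$0$ edge to a vertex of $C$ lying in $G_i$'s exclusive part (which exists by the normalisation); being incident to an exclusive vertex, this attaching edge is itself exclusive to $G_i$, and exclusive weight-$0$ edges are allowed in a \POISMST instance. Each exclusive weight-$1$ edge $uv$ of $G_i$ with $u\in C$, $v\in C'$ is then replaced by the intersection weight-$1$ edge $p_Cp_{C'}$. The point that needs care is that a re-routed edge is now presented by \SKA to the other graph as well; to keep the two graphs in lock-step I additionally equip each graph with a weight-$0$-isolated ``shadow'' copy of the proxies of the other graph, so that, after contracting weight-$0$ edges, each of $G_1'$ and $G_2'$ carries a faithful copy of its own weight-$1$ structure together with an isolated copy of the opponent's exclusive weight-$1$ structure, and then attach a few auxiliary weight-$1$ edges from a fresh isolated hub to each block of proxies so that (i) both graphs process every re-routed edge identically and hence never disagree, and (ii) $G_1'$ and $G_2'$ still require equally many weight-$1$ edges, the hub edges absorbing exactly the slack. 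All of this adds only linearly many vertices and edges, so together with the at most quadratic blow-up of the two normalising observations the reduction is polynomial of at most quadratic size; and since every new weight-$0$ edge is a pendant attachment of a brand-new vertex, simultaneous acyclicity of the weight-$0$ edges is preserved.

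Correctness I would argue through \SKA. By Lemma~\ref{lem:ska} the weight-$1$ stage is independent of the weight-$0$ choices and \SKA is read-once, so it suffices to analyse the single weight-$1$ stage on the contracted graphs. I would show that a set $S$ of the (now all intersection) weight-$1$ edges is a simultaneous minimum spanning forest of the new instance exactly when the edges of $S$ lying in $G_1$'s faithful copy form a spanning forest of $G_1^{c}$, those lying in $G_2$'s faithful copy form a spanning forest of $G_2^{c}$, and the common copy of each intersection edge ties these two forests together on the intersection -- which is precisely the original simultaneity condition $T_1\cap\bar G = T_2\cap\bar G$. The forward and backward translations of solutions are then routine. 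The principal obstacle, and where essentially all the work lies, is showing that the shadow of one graph's exclusive weight-$1$ structure behaves under Kruskal exactly like the corresponding faithful copy in the other graph: it must be flexible enough that every acyclic configuration arising from a genuine minimum spanning tree is still extendable (no over-constraining), yet the hub edges must take up precisely the surplus so that both graphs always want the same number of weight-$1$ edges (no unbalancing) -- either defect would make \NSKA reject a feasible instance. Pinning this down comes down to a component-counting argument on $G_1^{c}$ and $G_2^{c}$ and careful use of the equalisation guaranteed by Observation~\ref{obs:OIsamesizesolutions}.
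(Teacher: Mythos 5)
Your normalisation steps (Observations~\ref{obs:OInotransedges} and \ref{obs:OIsamesizesolutions}, contraction of weight-0 components) match the paper, but the heart of the lemma -- how to relocate the exclusive weight-1 edges into the intersection without breaking simultaneity -- is exactly the part you leave unproved, and I do not believe your ``proxy + shadow + hub'' gadget can be instantiated correctly as described. The difficulty you yourself name is real: once a re-routed edge $p_Cp_{C'}$ lies in the intersection, $T_2$ must contain it if and only if $T_1$ does, and $T_2$ must still be a minimum spanning tree of $G_2$. If the shadow proxies form their own component of $G_2$, every MST of $G_2$ must contain a \emph{maximal} acyclic subset of the shadow edges, whereas the set of $G_1$-exclusive weight-1 edges used by an MST of $G_1$ need not be maximal within $G_1$'s exclusive weight-1 structure (two exclusive components of $G_1$ may instead be joined through the intersection), so $G_2$ over-constrains $G_1$ and \NSKA can reject feasible instances. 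The hub edges cannot repair this: if they are exclusive to $G_2$ the output instance is not a \POISMST instance at all (all weight-1 edges must lie in the intersection), and if they lie in the intersection then $G_1$ sees them too, where they connect $G_1$'s components through the hub and proxies and thereby change which sets are minimum spanning trees of $G_1$. You give no concrete specification of how many hub edges to add, to which proxies, or in which part of the graph, and the ``component-counting argument'' that is supposed to establish both non-over-constraining and count-balancing is precisely the content of the lemma; as written this is a genuine gap, not a routine verification.

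The paper avoids this obstruction with a different gadget that makes the simultaneity constraint consistent by construction: for every \emph{pair} $(e,f)$ with $e$ exclusive weight-1 in $G_1$ and $f$ exclusive weight-1 in $G_2$, it adds two intersection vertices $x^{ef}_1,x^{ef}_2$, four exclusive weight-0 edges attaching them to the endpoints of $e$ (in $G_1$) and of $f$ (in $G_2$), and a single intersection weight-1 edge $x^{ef}_1x^{ef}_2$; all exclusive weight-1 edges are then deleted. Choosing that one intersection edge simulates taking $e$ in $G_1$ \emph{and} $f$ in $G_2$ simultaneously, so both Kruskal instances always agree on it, and Observation~\ref{obs:OIsamesizesolutions} guarantees that in any solution the chosen exclusive weight-1 edges of the two graphs can be perfectly paired, which is what makes the backward translation work. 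This is why the lemma is stated with a quadratic (not linear) size bound. If you want to salvage a proxy-per-component construction you would have to prove that the receiving graph is genuinely indifferent to every inclusion pattern that an MST of the other graph can induce, which the pairing gadget sidesteps entirely.
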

\begin{proof}
Let us have an instance of \OISMST and let $G_1,G_2$ denote the two graphs and $\bar{G}$ their intersection. Using the previous observation we can assume without loss of generality that all of the weight-1 edges have either both ends contained in $\bar{G}$ or both ends contained in the exclusive part of one of the two graphs; and that there exists a positive integer $j$ such that every solution of the \OISMST constrained to both $G_1$ or $G_2$ has exactly $j$ edges of weight 1. This increases the size of the problem by a small multiplicative constant. 

We modify the problem so that all of the edges from the exclusive parts are removed and equivalently modeled by gadgets that have edges of weight 1 only in $\bar{G}$. To do this, we consider all pairs $e=(e_1,e_2),f=(f_1,f_2)$ of edges of weight 1 such that $e$ is from the exclusive part of $G_1$ and $f$ is from the exclusive part of $G_2$. We create two new vertices $x^{ef}_1,x^{ef}_2$ in $\bar{G}$ and add edges $e_1x^{ef}_1,e_2x^{ef}_2,f_1x^{ef}_1,f_2x^{ef}_2$ of weight 0 and an edge $x^{ef}_1,x^{ef}_2$ of weight 1. After processing all pairs, we delete all the edges of weight 1 from the exclusive parts. 

Let $M$ be a solution of the modified instance of \OISMST (which is in fact \POISMST). First we observe that whenever $x^{ef}_1x^{ef}_2 \in M$ for some removed edges $e$ and $f$ then $x^{eg}_1x^{eg}_2 \notin M$ for any $g \neq f$ as otherwise $e_1,x^{ef}_1,x^{ef}_2,e_2,x^{eg}_2,x^{eg}_1,e_1$ forms a cycle in $M[G_1]$. To get a solution of the original instance, we remove all the extra edges of weight 0 and replace each edge $x^{ef}_1x^{ef}_2$ by edges $e$ and $f$. Let us denote the resulting set of edges $M'$. Consider the graph $G_1$ and the components defined by $M'$ restricted to $G_1$. It is easy to see that the components are the same as in $M$ with the exception of the new vertices which are now isolated. Also, the total weight of $M'$ restricted to each graph (of the original instance) is the same as the total weight of $M$ restricted to each graph (of the modified instance). We conclude that $M'$ is a minimum simultaneous spanning tree.

\begin{figure}
    \centering
    \includegraphics{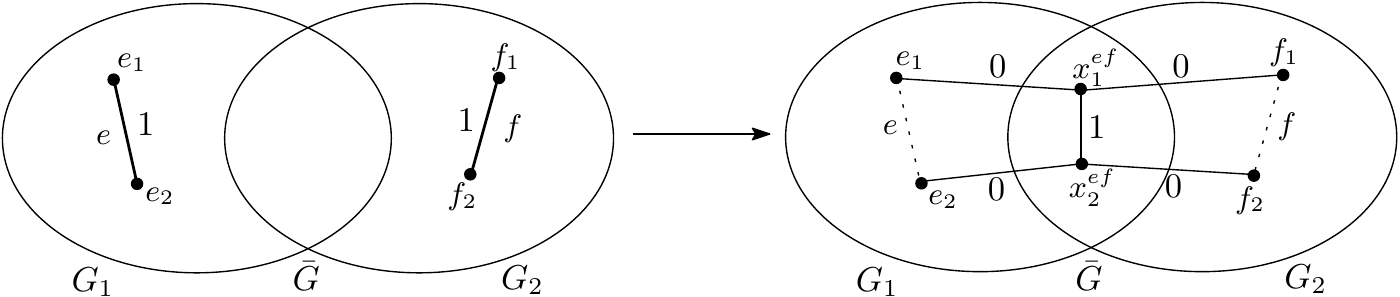}
    \caption{Gadget replacing pairs of edges}
    \label{fig:my_label}
\end{figure}

On the other hand, let $\bar{M}$ be a solution of the original instance. Since there is the same amount of edges of weight 1 in $\bar{M}$ restricted to $G_1$ and $G_2$, we can pair all of the edges from $\bar{M}$ of weight 1 that are in the exclusive parts of $G_1$ and $G_2$. We can now replace each pair of edges $e$ and $f$ by $x^{ef}_1x^{ef}_2$. After adding all the new edges of weight 0, we get a solution of the modified instance. Therefore the total cost of $\bar{M}$ is at most a total cost of the given solution. 

Supposing that the original edges of weight 0 form a simultaneously acyclic set, we observe that the same is true after the reduction, as each new cycle contains an edge of weight 1. Furthermore we added at most a constant number of edges and vertices for all of the pairs of original edges, obtaining a problem of asymptotically at most quadratic size compared to the input problem. \qed
\end{proof}

\subsection{Matroids}

\begin{definition}
A {\em matroid} $M$ is a pair $(E,I)$ where $E$ is a set of elements and $I$ is a family of independent sets (subsets of $E$) satisfying the following properties: 

\begin{enumerate}
    \item $\emptyset \in I$ 
    \item $\forall X,Y$ s.t. $ X \in I$ and $ Y \subset X$ : $Y \in I$
    \item $\forall X,Y \in I$ s.t. $|X| > |Y|$ : $\exists x \in X\setminus Y$ s.t. $Y \cup \{x\} \in I$
\end{enumerate}
\end{definition}

\begin{definition}
Let $G$ be a graph with a set of edges $E$ and $I$ be a set of all acyclic subsets of $E$. Then $(E,I)$ is a {\em graphic matroid} of $G$. 
\end{definition}

\begin{fact}
For any graph $G$ (possibly multigraph with loops), the graphic matroid of $G$ is a matroid and maximal independent sets of this matroid are exactly all possible spanning trees of $G$. 
\end{fact}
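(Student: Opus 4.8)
The plan is to verify the three matroid axioms directly for the pair $(E,I)$, where $I$ is the family of acyclic edge subsets of $G$, and then to observe that the statement about maximal independent sets is essentially immediate from the paper's definition of a spanning tree as a maximal acyclic subgraph.

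Axioms 1 and 2 are trivial and I would dispatch them in a sentence: the empty edge set contains no cycle, and deleting edges from an acyclic subgraph cannot create a cycle, so every subset of an independent set is independent. I would also remark here that a loop is a cycle of length one and two parallel edges form a cycle of length two, so loops never lie in an independent set and at most one edge of each parallel class can; this is precisely why no special treatment of multigraphs or loops is needed.

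The substantive step is the exchange axiom (3), and this is where I expect the only real work to be. I would first record the standard count that an acyclic subgraph (forest) with $t$ edges on the vertex set $V$ has exactly $|V|-t$ connected components, counting isolated vertices, proved by induction on $t$ since adding one edge to a forest merges two distinct components. Now given $X,Y\in I$ with $|X|>|Y|$, the forest $X$ has strictly fewer components than the forest $Y$, so there exist vertices $u,v$ in the same component of $X$ but in different components of $Y$. Taking the unique $u$–$v$ path in $X$ and walking along it, we must encounter an edge $x$ whose endpoints lie in different components of $Y$; then $x\notin Y$ and hence $x\in X\setminus Y$, while $Y\cup\{x\}$ is acyclic because an edge joining two distinct components of a forest cannot close a cycle. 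This gives axiom (3), so $(E,I)$ is a matroid.

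For the final assertion, the maximal independent sets of $(E,I)$ are by definition the maximal acyclic edge subsets of $G$, which is exactly the notion of spanning tree (spanning forest) fixed in the preliminaries; so they coincide with the spanning trees of $G$, completing the proof. The only point needing care is a clean phrasing of the exchange argument; everything else is bookkeeping.
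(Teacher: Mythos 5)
Your proof is correct: the verification of the hereditary axioms, the component-counting argument ($|V|-t$ components for a forest with $t$ edges) yielding the exchange axiom, the remark disposing of loops and parallel edges, and the identification of maximal acyclic subgraphs with spanning trees under the paper's definition are all sound. Note that the paper states this as a \emph{Fact} and gives no proof at all, treating it as standard background; your argument is precisely the standard textbook proof being implicitly invoked, so there is nothing to compare beyond saying it fills in the omitted routine details correctly.
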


\begin{definition}
A {\em matroid intersection problem} of two matroids $(E,I_1)$ and $(E,I_2)$ on the same set of elements $E$ is the problem of finding a maximum subset of $E$ s.t. it is independent in both matroids. 
\end{definition}

\begin{fact}[\cite{b:edmonds}]
For a set $E$ and two matroids $(E,I_1)$ and $(E,I_2)$ given as independence oracles, the matroid intersection problem is solvable in polynomial time and polynomially many oracle queries. 
\end{fact}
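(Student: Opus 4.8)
The plan is to prove the stronger min--max statement that the size of a maximum common independent set equals $\min_{A \subseteq E}\bigl(r_1(A) + r_2(E \setminus A)\bigr)$, where $r_i$ denotes the rank function of the matroid $(E, I_i)$, and then to extract from its proof an augmenting-path algorithm whose correctness and polynomial query complexity follow directly. The overall structure is a primal--dual augmentation scheme: I maintain a common independent set $S$, repeatedly try to enlarge it by exactly one element, and certify optimality once no further enlargement is possible.

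First I would establish the easy direction of the inequality, which also supplies the termination certificate. For any common independent set $S$ and any $A \subseteq E$, the set $S \cap A$ is independent in $(E, I_1)$ and contained in $A$, so $|S \cap A| \le r_1(A)$; symmetrically $|S \setminus A| \le r_2(E \setminus A)$. Adding these gives $|S| \le r_1(A) + r_2(E \setminus A)$, hence $\max |S| \le \min_A \bigl(r_1(A) + r_2(E \setminus A)\bigr)$. This step costs no oracle queries.

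The core construction is the exchange graph $D(S)$ on vertex set $E$: for $x \in S$ and $y \notin S$ I put an arc $x \to y$ whenever $S - x + y \in I_1$ and an arc $y \to x$ whenever $S - x + y \in I_2$. I single out the source set $X_1 = \{y \notin S : S + y \in I_1\}$ and the sink set $X_2 = \{y \notin S : S + y \in I_2\}$. The algorithm searches (by breadth-first search) for a shortest directed $X_1$--$X_2$ path $P$; if one exists it augments by replacing $S$ with $S \mathbin{\triangle} V(P)$, and if none exists it stops. Building $D(S)$, $X_1$ and $X_2$ requires testing independence of $S - x + y$ and $S + y$ over the at most $n^2$ relevant pairs, i.e. $O(n^2)$ oracle calls, and the BFS is $O(n^2)$ graph work; since each augmentation raises $|S|$ by one, there are at most $n$ augmentations, giving an $O(n^3)$-query polynomial bound overall and establishing the complexity claim of the Fact.

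The hard part will be the two correctness lemmas. For \emph{augmentation} I must show that if $P$ is a shortest $X_1$--$X_2$ path then $S' = S \mathbin{\triangle} V(P)$ is again a common independent set with $|S'| = |S| + 1$; the key technical ingredient is that a shortest such path admits no chord (``shortcut'' arc), after which a careful exchange argument---repeatedly invoking the exchange axiom (property~3) along the alternation of $P$ between $S$ and $E \setminus S$, treated separately in each of the two matroids---shows that independence is preserved on both sides. For \emph{termination} I would, once no $X_1$--$X_2$ path exists, let $A$ be the set of vertices from which $X_2$ is unreachable, and verify that $S$ then saturates both $r_1(A)$ and $r_2(E \setminus A)$, so $|S| = r_1(A) + r_2(E \setminus A)$ attains the easy upper bound and $S$ is maximum. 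Matching these two lemmas against the easy inequality simultaneously closes the min--max theorem and proves the algorithm correct, which is precisely the content of the Fact.
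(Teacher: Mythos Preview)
The paper does not prove this statement at all: it is presented as a \emph{Fact} with a citation to Edmonds, and the authors simply invoke it as a black box later in Lemma~\ref{lem:acyclic01}. There is therefore no proof in the paper against which to compare your attempt.

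That said, your sketch is the standard augmenting-path proof of the matroid intersection theorem and is correct in outline. You build the exchange digraph $D(S)$, augment along shortest $X_1$--$X_2$ paths, and certify optimality via the min--max formula $\max|S| = \min_A\bigl(r_1(A)+r_2(E\setminus A)\bigr)$; the $O(n^3)$ oracle-query bound follows. The only places that would need more care in a full write-up are the chord-free lemma for shortest augmenting paths and the verification that the unreachable set $A$ at termination really attains both rank bounds, but you have flagged both of these explicitly. In short: your proposal is sound and far more detailed than anything the paper provides, since the paper treats the result as background.
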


\begin{fact}[\cite{b:graphicintersection}]
There are specialized algorithms for graphic matroid intersection problem. 
\end{fact}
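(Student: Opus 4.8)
The plan is to establish the fact constructively, by exhibiting one such specialized algorithm rather than appealing to the generic oracle-based procedure. Recall that the graphic matroid intersection problem asks, given two graphs $H_1$ and $H_2$ sharing a common edge (ground) set $E$, for a maximum subset $F \subseteq E$ that is simultaneously a forest in $H_1$ and a forest in $H_2$. The starting point is the standard augmenting-path scheme for matroid intersection: maintain a common independent set $F$, build an auxiliary directed graph (the exchange graph) on $E$, locate a shortest path from a designated source set to a designated sink set, and augment $F$ by taking the symmetric difference with the edge set of that path. Each successful augmentation increases $|F|$ by exactly one, and since $|F|$ is bounded by the common rank, only $O(\min\{|V(H_1)|, |V(H_2)|\})$ augmentation phases are needed. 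Correctness of this outer loop may be inherited wholesale from the general matroid intersection theory cited above (Edmonds); what remains is to show that every operation inside a phase can be implemented by exploiting the graph structure, which is precisely what makes the algorithm \emph{specialized}.

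I would then specialize each oracle-level operation. In a graphic matroid, $F + x$ is dependent if and only if the endpoints of $x$ already lie in the same tree of $F$, and in that case the exchange relation ``$F - y + x$ is independent'' holds exactly for those $y$ lying on the fundamental cycle that $x$ closes with respect to $F$. Thus the source set of the exchange graph is the set of $x \notin F$ whose endpoints lie in different trees of $F$ viewed in $H_1$, the sink set is defined symmetrically for $H_2$, and each exchange arc corresponds to membership on a fundamental cycle in one of the two forests. The key step is to maintain both forests in a dynamic-tree data structure (for example link-cut trees), so that connectivity queries, fundamental-cycle extraction, and the edge insertions and deletions caused by an augmentation are each handled in near-logarithmic time instead of by a black-box independence oracle. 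Under this representation, constructing the exchange graph and running a breadth-first search for a shortest augmenting path costs time polynomial in $|E|$ per phase, giving an overall polynomial, structure-exploiting algorithm and thereby proving the fact.

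I expect the main obstacle to be the bookkeeping of the augmentation step rather than any single query. Along a shortest augmenting path many elements enter and leave $F$ at once, and the two dynamic forests must be updated consistently with this symmetric difference; a naive rebuild after every phase would only match the generic bound and forfeit the claimed specialization, so the update must be argued to be performed incrementally and to preserve the forest invariant in both graphs simultaneously. A secondary technical point is to verify carefully that the graphic characterization of exchange arcs via fundamental cycles coincides, for both matroids at once, with the abstract exchange relation used in the correctness proof; once this correspondence is pinned down, the remainder of the analysis is routine.
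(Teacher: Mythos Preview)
The paper does not prove this statement at all: it is recorded as a \emph{Fact} with a bare citation to Gabow and Stallmann, and is used only to remark that the matroid-intersection subroutine invoked elsewhere can be accelerated when both matroids are graphic. There is therefore no ``paper's own proof'' to compare your proposal against.

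Your sketch is a reasonable outline of the standard augmenting-path framework specialized to graphic matroids, and the observations about fundamental cycles and dynamic-tree data structures are in the right spirit. As a self-contained argument, however, it is still only a plan: the quantitative claim that the per-phase work is genuinely faster than the generic oracle bound hinges on the ``incremental update'' step you flag as the main obstacle, and you do not actually carry it out. If you want a proof rather than a pointer, you would need to spell out the running time per phase and show it beats the generic $O(r^{1.5}m)$-type bound; otherwise, simply citing the reference (as the paper does) is the appropriate level of detail here.
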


\begin{lemma}\label{lem:grmatroid}
Let $G$ be a graph with edges divided into two disjoint subsets $F$ and $\bar{E}$ where $F$ is acyclic and $\bar{E} = E(G) \backslash F$. Let $I$ be a set of all subsets $X$ of $\bar{E}$ such that $F \cup X$ is an acyclic subgraph of $G$. Then $(\bar{E},I)$ is a graphic matroid. 
\end{lemma}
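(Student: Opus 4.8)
The plan is to exhibit $(\bar{E}, I)$ as the graphic matroid of an explicitly constructed graph, namely the graph $G' := G/F$ obtained from $G$ by contracting every edge of $F$. Since $F$ is acyclic, the subgraph $(V(G), F)$ is a forest; I would let $C_1, \dots, C_t$ denote the vertex sets of its connected components. The graph $G'$ then has vertex set $\{C_1, \dots, C_t\}$, and for each edge $e = uv \in \bar{E}$ it contains one edge $e'$ joining the component containing $u$ to the component containing $v$ --- a loop if $u$ and $v$ lie in the same component, and parallel edges are permitted. The map $e \mapsto e'$ is a bijection between $\bar{E}$ and $E(G')$, and it extends to a bijection $X \mapsto X'$ between subsets of $\bar{E}$ and subsets of $E(G')$.

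The heart of the argument is the claim that for every $X \subseteq \bar{E}$, the subgraph $F \cup X$ is acyclic in $G$ if and only if $X'$ is acyclic in $G'$. For the ``only if'' direction I would argue by contraposition: a cycle of $G'$ contained in $X'$ lifts to a cycle of $G$ contained in $F \cup X$ by reinserting, inside each contracted component $C_i$, the unique path of the tree $F|_{C_i}$ between the two relevant endpoints (a single loop $e'$ lifts using the $F$-path between the two ends of $e$ together with $e$ itself); the tree paths are simple and, lying in pairwise vertex-disjoint sets, combine with the edges of $X$ into a genuine cycle. For the ``if'' direction, a cycle $D$ of $F \cup X$ must use at least one edge of $X$ (as $F$ is acyclic), and projecting $D$ onto $G'$ collapses the $F$-edges to vertices and yields a closed trail of positive length on the edge set $X \cap E(D) \subseteq X'$; since a closed trail of positive length always contains a cycle, $X'$ is not acyclic in $G'$.

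Granting the claim, $X \in I$ holds exactly when $X'$ is acyclic in $G'$, so the bijection $X \mapsto X'$ is an isomorphism of $(\bar{E}, I)$ onto the graphic matroid of $G'$; since the graphic matroid of any (multi)graph is a matroid, $(\bar{E}, I)$ is a graphic matroid. I expect the main obstacle to be the bookkeeping in the claim: carefully matching cycles across the contraction and handling the degenerate cases --- an edge of $\bar{E}$ that becomes a loop (i.e.\ an $e$ for which $F \cup \{e\}$ is already cyclic), parallel edges in $G'$, and reinserted tree paths that are trivial. As an alternative that avoids constructing $G'$ but only proves the weaker assertion that $(\bar{E}, I)$ is a matroid, one may verify the three axioms directly: (1) and (2) are immediate, and for (3) one applies the exchange axiom of the graphic matroid of $G$ to the independent sets $F \cup X$ and $F \cup Y$, observing that since $F$ is disjoint from $\bar{E}$ every exchangeable element necessarily lies in $X \setminus Y \subseteq \bar{E}$; to obtain the ``graphic'' conclusion, however, the contraction viewpoint is the one I would present.
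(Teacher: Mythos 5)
Your proposal is correct and takes essentially the same route as the paper: the paper's proof is exactly to contract the edges of $F$ (keeping loops and parallel edges) and observe that $(\bar{E},I)$ is the graphic matroid of the contracted graph, which you carry out with the acyclicity-preservation claim spelled out in detail.
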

\begin{proof}
Let $H$ denote $G$ with all edges from $F$ contracted; we keep all the parallel edges and loops. We observe that the graphic matroid of $H$ is exactly $(\bar{E},I)$. \qed
\end{proof}

\subsection{Polynomiality}

\begin{theorem}\label{thm:SSTinP}
\SST $\in$ \PP for any number of graphs. 
\end{theorem}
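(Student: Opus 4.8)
The plan is to recast the \SST problem as a matroid intersection problem. Recall that in \SST all edge weights are equal, so there is a single stage; a solution is a simultaneously acyclic edge set $T$ that is maximal, i.e.\ $T$ restricted to $G_1$ is a spanning tree of $G_1$ and $T$ restricted to $G_2$ is a spanning tree of $G_2$ (and, because of the sunflower structure, these restrictions agree on $\bar G$). For two graphs this is essentially already a matroid intersection: let $E = E(G_1)\cup E(G_2)$ be the common ground set, let $I_1$ be the independent sets of the graphic matroid of $G_1$ with all exclusive edges of $G_2$ declared free (loops, so always independent), and let $I_2$ be the symmetric construction with the roles of $G_1$ and $G_2$ swapped. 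Then a set is independent in both $M_1$ and $M_2$ exactly when it is simultaneously acyclic. Applying the matroid intersection algorithm yields a maximum-size simultaneously acyclic set; I would then argue that such a maximum-size set, when it corresponds to a genuine solution, must in fact induce a spanning tree on each $G_i$, and conversely that a solution exists if and only if the maximum simultaneously acyclic set has the ``right'' size, namely $(|V_1|-c_1) + (|V_2|-c_2) - (|\bar V| - \bar c)$ where $c_i$ and $\bar c$ are the numbers of connected components of $G_i$ and $\bar G$ — this is the size forced by any valid solution, since on $\bar G$ the two trees coincide and the exclusive parts contribute independently.

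For an arbitrary number $k$ of graphs the two-matroid trick no longer directly applies, so here is where I expect the real work. The idea is to still use a single matroid intersection, but to build the ground set cleverly using the sunflower structure. Take a ground set that has one copy of each exclusive edge and, for each edge $e$ of $\bar G$, a single ``shared'' element $e$ (not $k$ copies). Define $M_1$ to be the direct sum over $i$ of the graphic matroids of the exclusive parts $G_i \setminus \bar G$ (treating each $G_i$'s exclusive edges as acting only within $G_i$). Define $M_2$ using the contracted structure: contract nothing, but require that the selected shared edges, together with the selected exclusive edges of $G_i$, stay acyclic in $G_i$ \emph{simultaneously for all $i$} — this last condition is unfortunately \emph{not} a matroid constraint in general (it is the intersection of $k$ graphic matroids on the shared part), which is exactly why \SMST becomes hard for $k\ge 3$. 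So for \SST with $k\ge 3$ I cannot reduce to two matroids. Instead I would take a different, direct route: iterate a greedy/augmenting argument over the intersection. Because there is only one weight class, Lemma~\ref{lem:ska} tells us the component structure of every $G_i$ after processing is determined; run \SKA / Kruskal on each exclusive part first (this is unconditionally fine by Observation~\ref{obs:PIOSMST}-type reasoning — exclusive edges never cause \SKA to fail), which fixes, for each $i$, a forest $F_i$ in $G_i\setminus\bar G$ and hence a partition of $V_i$ into ``blobs.'' What remains is to choose a subset $S\subseteq E(\bar G)$ that is simultaneously a spanning forest when added on top of each blob-contracted $G_i$. This is now a genuine multi-matroid intersection on the small ground set $E(\bar G)$, but the crucial point is that all $k$ matroids here are graphic matroids on the \emph{same} contracted vertex set of $\bar G$ refined differently per graph; by contracting $\bar G$ along $F_i$'s blobs I obtain $k$ graphic matroids, and I want a common basis.

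Since a common \emph{basis} (not merely a common independent set) of $\ge 3$ matroids is \NP-hard in general, the only way this works for \SST is that the required set $S$ need not be a common basis — any maximal simultaneously acyclic $S$ automatically connects each $G_i$, because the exclusive forests $F_i$ already reach every vertex of $V_i\setminus\bar V$ and the only vertices possibly left unconnected are in $\bar V$, where connectivity in $\bar G$ suffices. Concretely I would prove: a maximum-size simultaneously acyclic $S\subseteq E(\bar G)$ (on top of the fixed $F_i$'s) yields a solution of \SST if and only if $S$ spans $\bar G$, i.e.\ $|S| = |\bar V| - \bar c$; and if it does, the union $S\cup\bigcup_i F_i$ restricted to each $G_i$ is a spanning tree of $G_i$. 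The first direction is immediate; for the ``if'' direction I need to check that once $S$ is a spanning forest of $\bar G$, adding it to $F_i$ can leave no vertex of $G_i$ in a separate component — true since every component of $F_i$ either already contains a vertex of $\bar V$ or is entirely exclusive but then, being a component produced by Kruskal on $G_i\setminus\bar G$ restricted appropriately\ldots\ actually I should be careful and instead argue from the fact that $S\cup F_i$ is acyclic and has exactly $|V_i|-c_i$ edges, hence is a spanning tree. So the whole reduction is: (1) solve exclusive parts deterministically, getting $F_i$'s; (2) contract and form $k$ graphic matroids on $E(\bar G)$; (3) for $k\le 2$ run matroid intersection to get a maximum common independent set; for general $k$ I fall back to \POISMST being \NP-complete, so \textbf{the statement ``\SST $\in$ \PP for any number of graphs'' must rely on a special feature of the single-weight case}. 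The main obstacle — and the key insight I expect the author's proof to exploit — is precisely that with one weight class there is no choice to be made among ``how many'' shared edges to take: the count $|\bar V|-\bar c$ is forced, the per-graph spanning condition becomes a dimension count rather than a genuine common-basis requirement, and a maximum common independent set of the $k$ contracted graphic matroids of $\bar G$ \emph{of size $|\bar V|-\bar c$} is exactly a spanning forest of $\bar G$ — which is a single-matroid (indeed trivial) condition, so the real content is showing that such an $S$ composes correctly with the $F_i$'s, reducing everything to $O(1)$ runs of graphic matroid intersection on pairs or, in fact, to ordinary spanning-forest computations.
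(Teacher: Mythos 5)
Your plan for general $k$ fixes spanning forests $F_i$ of the exclusive parts \emph{first} and only then looks for a set $S\subseteq E(\bar G)$ that remains simultaneously acyclic on top of the $F_i$'s. This is precisely the ``solve the exclusive parts first'' greedy that does not work, and it already fails on single-weight instances: let $G_1$ be the single edge $xy$ and $G_2$ the triangle $xyz$, so $\bar G$ is the edge $xy$. Kruskal on the exclusive part of $G_2$ gives $F_2=\{xz,zy\}$, after which no edge of $\bar G$ can be added ($xy$ now closes a cycle in $G_2$); your acceptance criterion $|S|=|\bar V|-\bar c$ is then violated and your algorithm answers ``no'', yet $\{xy,xz\}$ and $\{xy,yz\}$ are solutions. (This is exactly the example the paper itself uses to refute this greedy.) The paper's proof runs in the opposite order, and that order is the whole point: first take an arbitrary spanning forest of $\bar G$, then extend it inside each $G_i$ separately using exclusive edges only. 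An exclusive edge lies in a single graph, so refusing it when it closes a cycle can never conflict with the other graphs, and every edge of $\bar G$ outside the chosen forest closes a cycle already inside $\bar G$, so it is never needed; hence each extension is a maximal acyclic subgraph of its $G_i$ and all extensions agree on $\bar G$ by construction. In particular \SST always has a solution, for any number of graphs, and no matroid intersection is needed at all.

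Two auxiliary claims in your write-up are also incorrect and would sink the argument even with the order fixed: the number of $\bar G$-edges in a solution is \emph{not} forced to equal $|\bar V|-\bar c$ (take $G_1,G_2$ to be two triangles $xyz$ and $xyw$ sharing the edge $xy$: the set $\{xz,zy,xw,wy\}$ is a solution using no edge of $\bar G$, while $\{xy,xz,xw\}$ is a solution using one), and consequently your two-graph feasibility test ``maximum simultaneously acyclic set has size $(|V_1|-c_1)+(|V_2|-c_2)-(|\bar V|-\bar c)$'' is wrong as well, since different solutions of the same instance can have different cardinalities and the maximum simultaneously acyclic set can be strictly larger than your formula predicts.
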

\begin{proof}
To solve \SST, it suffices to use Kruskal's algorithm (or any other MST algorithm) to first take a minimum spanning tree of the intersection, and then extend this partial solution to each individual graph using only exclusive edges. Clearly each exclusive edge may only create a cycle in its respective graph. On the other hand we are never forced to take an exclusive edge closing a cycle (in fact, Kruskal's algorithm refuses such edges by definition). \qed
\end{proof}


\begin{lemma}\label{lem:acyclic01}
Let $I$ be an instance of \POISMST for two graphs such that the edges of weight 0 form a simultaneously acyclic set. Then $I$ can be solved in polynomial time using a matroid intersection algorithm. 
\end{lemma}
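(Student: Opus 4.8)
The plan is to reduce the instance to a single matroid intersection computation. Write $F_i$ for the set of weight-$0$ edges of $G_i$ and $\bar E$ for the set of all weight-$1$ edges. Since the weight-$0$ edges form a simultaneously acyclic set, each $F_i$ is an acyclic subgraph of $G_i$, hence is contained in every minimum spanning tree of $G_i$; so the set $F = F_1 \cup F_2$ of all weight-$0$ edges is contained in every solution. Because this is a \POISMST instance, every weight-$1$ edge lies in the intersection, so $E(G_i) = F_i \cup \bar E$ (a disjoint union) and $\bar E \subseteq E(G_1) \cap E(G_2)$. Therefore any solution has the form $T = F \cup X$ with $X \subseteq \bar E$, and $T \cap E(G_i) = F_i \cup X$; the two restrictions automatically agree on $\bar G$ (each equals the weight-$0$ edges of $\bar G$ together with $X$), so the only remaining requirement is that $F_i \cup X$ be a minimum spanning tree of $G_i$ for $i = 1, 2$.

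Next I would define, for $i = 1, 2$, the matroid $M_i = (\bar E, I_i)$ where $X \in I_i$ iff $F_i \cup X$ is acyclic in $G_i$; by Lemma~\ref{lem:grmatroid} (contract the forest $F_i$ in $G_i$, keeping parallel edges and loops) this is a graphic matroid. A short counting argument gives $\mathrm{rank}(M_i) = |V(G_i)| - |F_i| - c_i$, where $c_i$ is the number of connected components of $G_i$, since contracting the $|F_i|$ edges of the forest $F_i$ reduces the vertex count by $|F_i|$ and does not change the number of components. This number is exactly the minimum possible number of weight-$1$ edges in a spanning tree of $G_i$ (a spanning tree has $|V(G_i)| - c_i$ edges, and the largest number of weight-$0$ edges one can include is $|F_i|$). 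Hence $F_i \cup X$ is a minimum spanning tree of $G_i$ precisely when $X$ is a basis of $M_i$. I would then run a matroid intersection algorithm on $M_1$ and $M_2$ to obtain a maximum common independent set $X^{*}$, test whether $|X^{*}| = \mathrm{rank}(M_1) = \mathrm{rank}(M_2)$, and output $F \cup X^{*}$ if the test succeeds and ``no'' otherwise.

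It remains to verify the test. If the equality holds, $X^{*}$ is a basis of both $M_1$ and $M_2$, so $F_i \cup X^{*}$ is a minimum spanning tree of $G_i$ for each $i$, and by the first paragraph $F \cup X^{*}$ is a solution. Conversely, if a solution $T = F \cup X$ exists, then $T \cap E(G_i) = F_i \cup X$ is a spanning tree of $G_i$, so $X$ is a basis of $M_i$; hence $X$ is a common independent set with $|X| = \mathrm{rank}(M_1) = \mathrm{rank}(M_2)$, and since $\mathrm{rank}(M_1) \ge |X^{*}| \ge |X|$ the equality must hold. Building the contracted graphs and running matroid intersection are polynomial, and the preprocessing and checks are clearly polynomial, so the whole procedure is polynomial.

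The step I expect to be the main obstacle is the correctness of the feasibility test: establishing that a maximum common independent set witnesses a solution exactly when it attains the common rank. This hinges on two points — that the existence of \emph{any} solution forces $\mathrm{rank}(M_1) = \mathrm{rank}(M_2)$, and that weight-minimality of each $T_i$ corresponds to $X$ being a \emph{basis} of $M_i$, not merely a maximal common independent set. The bookkeeping that $F \cup X$ has agreeing restrictions on $\bar G$ and introduces no cross-graph cycles is routine given the \POISMST structure and the simultaneous acyclicity of the weight-$0$ edges.
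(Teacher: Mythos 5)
Your proposal is correct and follows essentially the same route as the paper: you build the same two matroids on the ground set of weight-1 edges via Lemma~\ref{lem:grmatroid}, run matroid intersection, and accept exactly when the maximum common independent set reaches the required size. The only cosmetic differences are that you prove directly (by an exchange argument) that all weight-0 edges lie in every solution where the paper invokes Lemma~\ref{lem:selfred}, and you phrase the feasibility test as the rank equality $|X^{*}|=\mathrm{rank}(M_1)=\mathrm{rank}(M_2)$, which is equivalent to the paper's check that $X\cup F$ spans the right components.
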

\begin{proof}
For each of the two graphs $G_i$ for $i \in \{1,2\}$ we define $F_i$ as the set containing all edges of weight 0 and $\bar{E}$ the set of all edges of weight 1. Let $I_i$ be a set of all subsets $X$ of $\bar{E}$ such that $X \cup F_i$ is acyclic in $G_i$ and let $M_i$ denote the pair $(\bar{E},I_i)$. According to Lemma \ref{lem:grmatroid} each $M_i$ is a matroid. Furthermore, both of the matroids are defined on the same ground set $\bar{E}$.

Let $F = F_1 \cup F_2$ be all the edges of weight 0. By lemma \ref{lem:selfred}, $F$ can be extended to a solution of the \POISMST by a suitable subset of $\bar{E}$. We can now use a (graphic) matroid intersection algorithm to find a set $X$ which is a maximum subset of $\bar{E}$ independent in both matroids $M_1$ and $M_2$. Therefore $X$ is the maximum subset of $\bar{E}$ that extends $F$ so that $X \cup F$ is simultaneously acyclic. If $X \cup F$ restricted to $G_1$ and $G_2$ spans all components, we output $X \cup F$, otherwise we answer "no". This is the same as to compare the size of $X \cup F$ to the size it should have.

Clearly if there exists a solution of the given \POISMST instance, then according to Lemma~\ref{lem:selfred} there exists a solution $Y$ extending the set $F$. The set of edges $Y \backslash F$ is an independent set in both matroids $M_1$ and $M_2$ and therefore $X$ exists and is of size $|Y \backslash F|$. This means that $X \cup F$ is a simultaneous spanning tree and the algorithm answers correctly. On the other hand, if no solution exists, then the set $X \cup F$ restricted to either $G_1$ or $G_2$ is acyclic but does not connect all the vertices connected in the original graph. We recognize this case and answer "no" correctly. \qed
\end{proof}

\begin{lemma}\label{lem:POIinP}
\POISMST $\in$ \PP for two graphs.
\end{lemma}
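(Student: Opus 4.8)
The plan is to reduce an arbitrary \POISMST instance for two graphs to the special case already handled by Lemma~\ref{lem:acyclic01}, namely the case in which the weight-$0$ edges form a simultaneously acyclic set. The only obstruction is that in a general \POISMST instance the weight-$0$ edges may close cycles inside $G_1$ or $G_2$; to remove them I would first settle the weight-$0$ stage on its own, exploiting the fact that, on its own, it is nothing but an \SST instance.

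Concretely, I would proceed as follows. Let $I$ be the given \POISMST instance with graphs $G_1,G_2$ and intersection $\bar{G}$. First, consider the subinstance $I_0$ obtained by keeping only the weight-$0$ edges; since it uses a single weight it is an \SST instance on the same sunflower, so by Theorem~\ref{thm:SSTinP} it can be solved in polynomial time. If $I_0$ is infeasible then $I$ is infeasible too (the weight-$0$ part of any solution of $I$ would solve $I_0$), and we answer ``no''. Otherwise let $T_0$ be the returned simultaneous spanning forest of the weight-$0$ subgraph; since $T_0$ restricted to each $G_i$ is a maximal acyclic weight-$0$ subgraph, $T_0$ is simultaneously acyclic. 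Now form a new instance $I'$ from $I$ by deleting every weight-$0$ edge not in $T_0$ and keeping all weight-$1$ edges (these lie in $\bar G$, so the result is again a sunflower-shaped \POISMST instance). The weight-$0$ edges of $I'$ are exactly $T_0$, a simultaneously acyclic set, so Lemma~\ref{lem:acyclic01} solves $I'$ in polynomial time; I would output its answer unchanged, using that $E(G_i')\subseteq E(G_i)$ so that a solution of $I'$ is literally an edge subset of $I$.

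It remains to argue that $I$ and $I'$ have the same feasibility and that every solution of $I'$ is a solution of $I$. For the forward direction, if $S$ solves $I$ then the weight-$0$ part of $S$ solves $I_0$, so the \SST solver returns some $T_0$; by Lemma~\ref{lem:selfred} applied with $w=0$, the solution $T_0$ of $I_0$ extends by a fixed set $X$ of weight-$1$ edges to a solution $T_0\cup X$ of $I$, and since $T_0\cup X$ uses no weight-$0$ edge outside $T_0$ it is also a solution of $I'$. For the backward direction, observe that $G_i'$ is a subgraph of $G_i$ with the same connected components, because $T_0\cap G_i$ already spans all weight-$0$ components of $G_i$ and the weight-$1$ edges are untouched; by Lemma~\ref{lem:ska} the components after the weight-$0$ stage and the weight-$1$ edges available afterwards are identical in $G_i$ and $G_i'$, so a minimum spanning tree of $G_i'$ has the same numbers of weight-$0$ and weight-$1$ edges as one of $G_i$. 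Being acyclic as an edge set and spanning the components of $G_i'=$ components of $G_i$, it is a maximal acyclic subgraph of $G_i$ of minimum weight, i.e.\ a minimum spanning tree of $G_i$; hence any solution of $I'$ is a solution of $I$.

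I expect the main obstacle to be exactly this last point: justifying that we may freeze the weight-$0$ edges to one particular \SST solution $T_0$ without losing any simultaneous solution of $I$. This is precisely what the clause ``this extension does not depend on $T'$'' in Lemma~\ref{lem:selfred} provides, in combination with the stage-independence statements of Lemma~\ref{lem:ska}; everything else is routine bookkeeping checking that the sunflower structure, the acyclicity, and the weight counts are preserved by the deletion.
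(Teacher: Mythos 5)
Your proposal is correct and follows essentially the same route as the paper: solve the weight-$0$ subinstance as an \SST via Theorem~\ref{thm:SSTinP}, discard all weight-$0$ edges outside the returned forest (justified by Lemma~\ref{lem:selfred}), and then apply Lemma~\ref{lem:acyclic01} to the resulting instance with simultaneously acyclic weight-$0$ edges. Your extra bookkeeping about component preservation is a more explicit version of what the paper leaves implicit, but the argument is the same.
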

\begin{proof}
Let $I$ be an instance of the \POISMST problem. We show that we can solve $I$ using a (graphic) matroid intersection algorithm. 

First suppose that the edges of weight 0 are not simultaneously acyclic. We simply restrict $I$ to edges of weight 0, which gives us an instance of \SST. We can solve this instance in polynomial time according to Theorem \ref{thm:SSTinP}. If we obtain answer "no", then according to Lemma \ref{lem:selfred} there is no solution and we also answer "no". 

Suppose we get a solution $X$. Then, by Lemma \ref{lem:selfred}, we may delete all the edges of weight 0 except the edges from $X$ and further assume that the edges of weight 0 are simultaneously acyclic. We use Lemma~\ref{lem:acyclic01} to solve this reduced instance in polynomial time. \qed
\end{proof}

\begin{theorem}\label{thm:SMSTinP}
\SMST $\in$ \PP for two graphs.
\end{theorem}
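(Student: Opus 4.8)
The plan is to chain together the reductions established earlier in the excerpt. The starting point is Lemma~\ref{lem:01reduction}, which tells us that any polynomial-time algorithm for \OISMST yields a polynomial-time algorithm for general \SMST. So it suffices to show \OISMST $\in$ \PP for two graphs. For this I would invoke Lemma~\ref{lem:OItoPOI}, which reduces any two-graph \OISMST instance to a two-graph \POISMST instance of at most quadratically larger size. Finally, Lemma~\ref{lem:POIinP} states that \POISMST $\in$ \PP for two graphs, which closes the chain.

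First I would take an arbitrary instance $I$ of the two-graph \SMST problem. By Lemma~\ref{lem:01reduction} (and its proof, which peels off weight values one at a time, each time solving a smaller \SMST instance and then calling an \OISMST solver on the gadget built from the partial solution and the heaviest remaining weight class), it is enough to have a polynomial-time subroutine for \OISMST on two graphs; the overhead is polynomial since the number of distinct weights is at most $m(I)$ and each stage costs a polynomial amount of work plus one \OISMST call. Next, given a two-graph \OISMST instance, I would apply Lemma~\ref{lem:OItoPOI} to obtain an equivalent two-graph \POISMST instance of asymptotically at most quadratic size. Then Lemma~\ref{lem:POIinP} solves that \POISMST instance in polynomial time: it first checks whether the weight-0 edges are simultaneously acyclic by solving an \SST instance via Theorem~\ref{thm:SSTinP}, and if so reduces (using Lemma~\ref{lem:selfred}) to the simultaneously-acyclic case handled by Lemma~\ref{lem:acyclic01} through a graphic matroid intersection computation. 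Tracing the total running time, each reduction blows up the instance size by at most a constant or quadratic factor and adds polynomial overhead, so the overall algorithm is polynomial.

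There is no real obstacle left at this stage: the theorem is essentially a corollary assembled from the machinery already built. The only thing requiring care is bookkeeping the size blow-ups so that the composition stays polynomial --- in particular, checking that the weight-stripping induction of Lemma~\ref{lem:01reduction} does not cause an exponential cascade (it does not, since each level makes a single recursive call on a strictly smaller weight-set and one \OISMST call on a gadget of polynomial size), and that the quadratic blow-up in Lemma~\ref{lem:OItoPOI} composed with the \POISMST solver still runs in polynomial time. Hence the proof is simply: combine Lemma~\ref{lem:01reduction}, Lemma~\ref{lem:OItoPOI}, and Lemma~\ref{lem:POIinP}, and observe that the composition of polynomial-time reductions with a polynomial-time algorithm is polynomial-time. \qed
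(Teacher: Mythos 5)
Your proposal is correct and follows essentially the same route as the paper: chain Lemma~\ref{lem:01reduction} (reduce \SMST to polynomially many \OISMST instances), Lemma~\ref{lem:OItoPOI} (reduce two-graph \OISMST to two-graph \POISMST of at most quadratic size), and Lemma~\ref{lem:POIinP} (solve \POISMST via \SST and graphic matroid intersection), with a polynomial bookkeeping of the blow-ups. Your additional remarks on why the weight-stripping induction does not cascade exponentially are consistent with the paper's argument and require no changes.
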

\begin{proof}
Let us have an instance of the \SMST problem. According to Lemma \ref{lem:01reduction}, every instance of \SMST can be solved by solving at most $O(m)$ \OISMST problems, where $m$ denotes the number of edges on input. 

Any \OISMST can be polynomially reduced to \POISMST as shown in Lemma \ref{lem:OItoPOI}; and according to Lemma \ref{lem:POIinP}, each \POISMST instance can be solved in polynomial time. \qed
\end{proof}

\subsection{Complexity}

Let us have an instance of \SMST and let $n$ denote the number of vertices, $m$ the number of edges and $w$ the number of weights in the given instance. We proceed according to Theorem~\ref{thm:SMSTinP}. 

The \SMST problem is first decomposed into $(w-1)$ \OISMST subproblems. We observe that each edge in these subproblems is either already fixed as a part of the solution of \SMST or appears for the first time. The first kind of edges can be bound as at most $O(n)$ per \OISMST subproblem, as they must form a simultaneously acyclic set. The second kind can be bound as at most $O(m)$ over all of the \OISMST subproblems. 

Each of the \OISMST subproblems is reduced to a \POISMST problem of asymptotically at most quadratic size (by Lemma \ref{lem:OItoPOI}). Using the simultaneous acyclicity of edges of weight 0 we can use the approach of Lemma~\ref{lem:acyclic01} in all but the first subproblem, and use the Lemma~\ref{lem:POIinP} to solve the first subproblem. Therefore we solve at most $w$ (graphic) matroid intersection problems during the whole process and one instance of \SST problem. The final complexity depends on the choice of algorithms used to solve the matroid intersection problems and the \SST. 

Furthermore, if $w$ asymptotically approaches $m$, then some weight values have few representatives and more direct methods from Observation~\ref{cor:trivialstages} and Observation~\ref{obs:PIOSMST} using \SKA may be applied to reduce the complexity.

\section{Acknowledgements}

This paper is the output of the 2017 Problem~Seminar of Charles University. At this seminar undergraduate students attempt to solve open problems and learn to do research. We would like to thank Jan~Kratochv\'{i}l and Pavel~Valtr for their guidance, help and tea. 

The work was supported by the grant SVV--2017--260452 and grant CE-ITI P202/12/G061 of GA \v{C}R

\end{document}